%
%
%

\documentclass[prodmode,acmec]{acmsmall}


\usepackage[ruled]{algorithm2e}

\SetAlFnt{\small}
\SetAlCapFnt{\small}
\SetAlCapNameFnt{\small}
\SetAlCapHSkip{0pt}
\IncMargin{-\parindent}

\usepackage{latexsym,amsmath,amssymb}
\usepackage{comment}

\acmVolume{X}
\acmNumber{X}
\acmArticle{X}
\acmYear{2012}
\acmMonth{2}




\newcommand*{\forms}[1]{Formula(s)~#1}
\newcommand*{\formsref}[1]{\forms{\eqref{#1}}} 

\newcommand*{\sectn}[1]{Section~#1}
\newcommand*{\sectnref}[1]{\sectn{\ref{#1}}} 







\newcommand{\set}[1]{\left\{ #1 \right\}}

\newcommand{\inv}[1]{\frac{1}{#1}}
\newcommand{\abs}[1]{\left| #1 \right|}

\newcommand{\reals}{\mathbb{R}}




\newcommand{\calB}{\mathcal{B}}

\newcommand{\cals}{\mathcal{s}}

\newcommand*{\defas}{\ensuremath{\stackrel{\rm def}{=}}}
\newcommand{\opt}{\textsc{opt}}


\newtheorem{observation}[theorem]{Observation}

\newdef{defn}[theorem]{Definition}

\newcommand{\defined}[1]{\emph{#1}}


\newcommand{\NE}{NE} 
\newcommand{\SPE}{SPE} 

\DeclareMathOperator{\maxt}{M_2} 



{\begin{description}}
{\end{description}}


\begin{document}

\markboth{Gleb Polevoy et al.}{Signalling Competition and Social Welfare}

\title{Signalling Competition and Social Welfare\\
(Working Paper)}
\author{GLEB POLEVOY%
\footnote{Corresponding author: \texttt{glebp@tx.technion.ac.il, geruskin@yahoo.com}}
and RANN SMORODINSKY%
\footnote{\texttt{rann@ie.technion.ac.il}}
\affil{Industrial Engineering and Management, Technion}
MOSHE TENNENHOLTZ%
\footnote{\texttt{moshet@microsoft.com}}
\affil{Industrial Engineering and Management, Technion and Microsoft Research}}

\begin{abstract}
We consider an environment where sellers compete over buyers. All
sellers are a-priori identical and strategically signal buyers about
the product they sell. In a setting motivated by on-line advertising
in display ad exchanges, where firms use second price auctions, a
firm's strategy is a decision about its signaling scheme for a
stream of goods (e.g. user impressions), and a buyer's strategy is a
selection among the firms. In this setting, a single seller will
typically provide partial information and consequently a product may
be allocated inefficiently. Intuitively, competition among sellers
may induce sellers to provide more information in order to attract
buyers and thus increase efficiency. Surprisingly, we show that such
a competition among firms may yield significant loss in consumers'
social welfare with respect to the monopolistic setting. Although we
also show that in some cases the competitive setting yields gain in
social welfare, we provide a tight bound on that gain, which is
shown to be small in respect to the above possible loss.

Our model is tightly connected with the literature on bundling in auctions.
\end{abstract}

\category{J.4}{SOCIAL AND BEHAVIORAL SCIENCES}{Economics}
\terms{Economics, Theory}

\keywords{Competition, Equilibrium, Efficiency, Market, Social Welfare}

\acmformat{Polevoy, G., Smorodinsky, R., Tennenholtz, M.,
Signalling Competition and Social Welfare.}

\begin{bottomstuff}
This work is supported by the joint Microsoft-Technion e-Commerce lab.

\end{bottomstuff}

\maketitle

\section{Introduction}

One of the main themes in modern economic theory is that the introduction of competition
increases social welfare. Many of the anti-trust laws in the western world are designed to
motivate and protect such competition. The study of competition is a cornerstone of economic theory
and ranges from the classical models of Cournot
\cite{Cournot} and Bertrand \cite{Bertrand} models, to advanced competing auctions
\cite{McAfee,Ellisoncompete}.

Another topic that is well studied is signaling in situations of asymmetric information. Starting
from Spence \cite{Spence} and Akerlof \cite{Akerlof} seminal work,
this issue has become central to the analysis of strategic interaction.
Starting from the work of Milgrom and Weber
\cite{MilgromWeber} it has become central to the analysis of auctions
and economic mechanism design. In this paper we study signaling schemes in the context of competition.
We introduce a simple game which we refer to as a `signaling competition'.
In this game sellers compete over buyers by signalling on the nature of the goods they sell.
We study how the strategic use of signaling schemes effects the social welfare. In particular we compare the
single seller setting with that of competing sellers. The model we use is
motivated by the multi-billion ad exchanges market.

We adopt a model of Emek et al \cite{EFGT}, motivated by Levin and Milgrom \cite{MilgromLevin}.
We adapt the original model to a multi-seller environment.
In this model each firm is an on-line  publisher, which auctions {\it
ad opportunities} to buyers (advertisers). Any time an internet user accesses a web page
that belongs to the publisher an ad opportunity is created. Ad opportunities are different from one another and
each can be described by a long list of attributes.
Some of these attributes are related to the context of the page (e.g., a page that discusses triathlon events
and hence may appeal to advertisers of sportswear) while others are related to the user
(e.g., a male from Seattle that is a member of a forum on fishing). It is at the publisher's discretion to choose
which attributes to expose to the potential buyers.%
\footnote{There are clearly privacy related issues in this model, which we choose to ignore.}

We consider a finite set of attribute combinations and we refer to each such combination as a {\it variant}
of the product. We assume that each ad opportunity is randomly drawn from some commonly
known distribution over the variants (we restrict attention to a uniform distribution in this work).
Each buyer has a known valuation for each of the possible variants.
More specifically, we restrict attention to a binary model where the value of each variant to a buyer may be
either 0 or 1.
Social welfare is naturally defined as the probability of assigning a random good to a buyer
that values it. To simplify the analysis and isolate the impact of signalling we assume that all sellers
have the same pricing mechanism. In particular, each seller conducts a 2nd-price auction.
Thus, a seller's strategy is what to signal about the good.
A signalling scheme is modeled as a partition of the set of variants.
Note that such a partition is closely related to the notion of bundling which we discuss later.

The {\it Signalling Competition} is therefore a two-stage game, where the sellers choose a signalling scheme
(partition)
in the first stage and buyers choose which seller to access given the publicly announced signalling schemes.
Once buyers and sellers match up a random variant is realized and is allocated according to a second price auction.
In fact such a game captures one instance out of a flow of ad opportunities.
We assume, implicitly, that
the signaling scheme is determined for the whole stream of ad opportunities and that each
buyer's choice of a seller (publisher) is also fixed throughout.
This setting closely resembles situations arising
in display ad exchanges (see Muthukrishnan \cite{Muthukrishnan09} for a discussion).

Our main goal is to compare the social welfare in an environment with a single seller with that of
$k$ sellers. We assume that the number of sellers does not impact the number of ad opportunities.
Therefore, we resort to the following natural notion of welfare -
the probability that a random ad opportunity will be allocated to a buyer that values it.
In case there are $k$ sellers we consider the average probability of efficient allocation among the sellers.
This notion is motivated by the fact  that a seller will only receive $\frac{1}{k}$ of the ad opportunities.

It is intuitive to assume that competition among sellers could increase efficiency as such competition
induces sellers to provide more information  about the product, which in turn increases efficiency.
However, a countervailing force also exists - more sellers imply that each seller faces a
thinner market, which in turn could reduce efficiency. Market thinning is a major concern
in the online advertisement business and is discussed in depth by Levin and Milgrom \cite{MilgromLevin}.

Our findings are surprising: the consumers' social welfare in the competing
setting may severely drop with respect to the monopolistic
setting.  This is not necessarily the case and there are situations where
the social welfare may also be higher in the competing setting.
However, we provide a tight bound on this efficiency gain, which is shown to be
negligible in comparison to the potential loss due to competition.

It worth to notice that the model of signaling about the instance of a probabilistic good is
technically equivalent to a model of bundling of goods.
Indeed, the partition of the possible variants of a good into clusters,
while selling a stream of such instances along time,
can be associated with a fixed bundling of a given set of goods, and selling all bundles,
assuming additive valuations.
The use of such bundling of goods in the context of computational advertising
has been discussed in the CS literature in Ghosh et al \cite{Ghosh},
following the rich literature on the bundling of goods, originated in Palfrey \cite{Palfrey}.
Indeed, the idea that neither bundling
all items together, nor selling them separately is optimal has been made explicit in Jehiel et al \cite{Jehiel}.
Our results can be viewed  as applicable to the setting of bundling and to the above setting of signaling,
interchangeably. Whereas this literature focuses on the single auctioneer setting
we are interested in the implications of competition among sellers.

\subsection{Organization}

The paper is organized as follows: \sectnref{Sec:signal_comp} provides the  model and most of the definitions.
\sectnref{Sec:NE_basic_qualit} proves the existence of semi-pure subgame perfect equilibria, namely subgame perfect equilibria where all the buyers play pure strategies.
\sectnref{Sec:monopoly} focuses on the single seller case (the monopoly) and provides bounds on the social welfare loss for a revenue maximizing seller while
\sectnref{Sec:competition} analyzes the loss for the general case.
\sectnref{Sec:monop_compet} compares the prevailing efficiency in a monopolistic setting with that of a competitive setting (more than one seller).
Finally, \sectnref{Sec:conclusion_further} concludes.

\section{Model} \label{Sec:signal_comp}

\subsection{Definitions}    \label{Sec:signal_comp_defn}
A {\em signalling competition} $SC=SC(S,B,G,V)$, is a game composed of
{\bf (1)}
Two mutually exclusive finite sets of players: the set of
sellers~${S}$ and of the set of
buyers~${B}$.
{\bf(2)}
A finite set, ${G}$, of possible variants of a good. Each seller has a single good for sale.
{\bf (3)}
A valuation matrix,$V=(v^g_b)_{b \in B}^{g\in G}$, where $v_b^g  \in \{0,1\}$ denotes the valuation of buyer $b \in B$ for the variant $g \in G$. Players' utilities
are assumed quasi-linear.

Slightly abusing notation we will
use $S$, $B$ and $G$ to denote the cardinality of the corresponding sets as well.

The signalling competition proceeds as follows:
\begin{enumerate}
    \item First, each seller commits to a partition on the set ${G}$ of
    variants.%
    \footnote{A \defined{partition} on a set $X$ is a set of non-empty
    subsets of $X$ such that each two have an empty intersection and the
    union of all the subsets is $X$.}
    \item Second, given the sellers' partitions, buyers simultaneously choose a seller.
    \item Third, a variant of the good for each seller is realized according
    to a commonly known prior, which we assume is uniform, and variants are independent across sellers. Each seller knows which variant is realized but buyers only know to which element of the
    seller's partition the variant belongs (realistically, sellers disclose this information).
    \item Finally, each seller runs a second price auction and goods are
    allocated to the winners.
\end{enumerate}

A (pure) strategy of a seller $s$ is a partition on the set of
variants, $G$. We will typically denote such a partition of $s$ by
$\pi_s$ and the set of all partitions is denoted $\Pi$. An element $\pi \in \Pi^S$ is a strategy tuple of the sellers. A (pure) strategy for buyer $b$ is a function $f_b\colon \Pi^S \to {S}$. Thus, $f_b(\pi)$ denotes the seller chosen by $b$ for a given strategy profile of the sellers, $\pi \in \Pi^S$.
Buyers bid truthfully and the allocation of the good and its price are determined according a second price auction (note that we assume truthful bidding and ignore, from the outset, weakly dominated strategies).

For a buyer $b$, and a subset of variants $\overline{G} \subseteq G$, let
$v_b(\overline{G}) \defas \sum_{g\in \overline{G}}v_b^g$. Note that $\frac{v_b(\overline{G})}{|\overline{G}|}$  is $b$'s expected valuation conditional on a variant being in $\overline{G}$. Let $v_b \defas v_b(G)$.
Analogically, for a good $g$, and $\overline{B} \subseteq B$, we define the total demand for~$g$ in~$\overline{B}$ by $v^g(\overline{B}) \defas \sum_{b\in \overline{B}}{v_b^g}$ and $v^g \defas v^g(B)$.

For any $\overline{B} \subseteq B, \overline{G} \subseteq G$, denote by
$\bar V_1(\overline{B}, \overline{G}) \defas
\max\set{v_b(\overline{G}) | b\in \overline{B}}$ and let $\hat b$ denote an arbitrary buyer that satisfies $v_{\hat b}({\overline G}) = \bar V_1(\overline{B}, \overline{G})$. Let  $\bar V_2(\overline{B}, \overline{G}) \defas
\max(\set{v_b(\overline{G}) | b\in \overline{B}\setminus{\hat b}})$ be the second highest valuation for goods in $\overline G$ by players in $\overline B$.
We drop the reference to the set of players when we refer to the whole set, $B$.

Given a strategy tuple $\pi=(\pi_s)_{s\in S}$ for the sellers and a strategy tuple $f=(f_b)_{b \in B}$ for the buyers we denote by ${\calB}_s(\pi,f) \defas \set{{\hat b} | f_{\hat b}(\pi) = s}$  the set of buyers that choose $s$ and by
${\calB}_b(\pi,f) \defas \set{{\hat b} | f_{\hat b}(\pi) = f_{b}(\pi)}$  the set of buyers that co-bid with $b$ at the same seller.

Using this notation we can now write an explicit formula for players' utilities.

The utility of buyer $b$ is ${u}_b(\pi,f) = \sum_{{\overline G}\in \pi_{f_b(\pi)}}
\frac{\max\{ v_b({\overline G}) - \bar V_2({\calB}_b(\pi,f),{\overline G}), 0\} }{\abs{{\overline G}}}\cdot \frac{\abs{{\overline G}}}{G}$, and so:

\begin{equation}
{u}_b(\pi,f) =
\inv{G}
\sum_{{\overline G}\in \pi_{f_b(\pi)}}
\max\{ v_b({\overline G}) - \bar V_2({\calB}_b(\pi,f),{\overline G}), 0\}
\label{eq:buyer_utility}
\end{equation}

The utility of seller $s$ is
\begin{equation}\label{eq:utility of seller}
{u}_s(\pi,f) = \sum_{{\overline G}\in \pi_s}
\frac{\bar V_2({\calB}_s(\pi,f),{\overline G}) }{\abs{{\overline G}}}\cdot \frac{\abs{{\overline G}}}{G}
=
\inv{G}
\sum_{{\overline G}\in \pi_s}
\bar V_2({\calB}_s(\pi,f),{\overline G})
\end{equation}

We extend the notion of utility from pure strategies to mixed strategies in the usual way, via expectation.

\subsection{Social Welfare}

Our model assumes a constant stream of goods that is sold via $S$ competing auctions. Eventually each good is either allocated to a player who values it at $1$ or at $0$. The efficiency measure we consider is the proportion of goods allocated to agents that do value them. Thus, given a strategy tuple $(\pi,f)$ we define the {\em social welfare} as follows:

\begin{equation}
SW(\pi,f)=  \inv{SG}\sum_{s\in S}
\sum_{{\overline G}\in \pi_s}
\bar V_1({\calB}_s(\pi,f),{\overline G}).
\label{eq:social_welfare}
\end{equation}

\begin{lemma}
\label{lem:SW_is_sum_utilties}
$SW(\pi,f) = \inv{S}(\sum_s u_s(\pi,f) + \sum_b u_b(\pi,f)).$
\end{lemma}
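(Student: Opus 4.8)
The plan is to prove the identity by directly manipulating the closed-form expressions \eqref{eq:buyer_utility}, \eqref{eq:utility of seller} and \eqref{eq:social_welfare}. The single economic fact I would use is the defining property of the second price auction run inside each partition cell: for a seller $s$ and a cell $\overline G\in\pi_s$, among the bidders $\calB_s(\pi,f)$ the one attaining the top valuation $\bar V_1(\calB_s(\pi,f),\overline G)$ wins and pays $\bar V_2(\calB_s(\pi,f),\overline G)$, earning surplus $\bar V_1-\bar V_2\ge 0$, whereas every other bidder $b\in\calB_s(\pi,f)$ has $v_b(\overline G)\le \bar V_2(\calB_s(\pi,f),\overline G)$ and therefore contributes nothing to the truncated term $\max\{v_b(\overline G)-\bar V_2(\calB_s(\pi,f),\overline G),0\}$ in \eqref{eq:buyer_utility}.

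First I would regroup $\sum_{b\in B}u_b(\pi,f)$ by the seller each buyer selected. Because a buyer $b$ with $f_b(\pi)=s$ co-bids exactly with $\calB_s(\pi,f)$ (so that $\calB_b(\pi,f)=\calB_s(\pi,f)$ and $\pi_{f_b(\pi)}=\pi_s$), and because $\set{b : f_b(\pi)=s}=\calB_s(\pi,f)$, an interchange of the finite sums gives
\[\sum_{b\in B}u_b(\pi,f)=\inv{G}\sum_{s\in S}\sum_{\overline G\in\pi_s}\sum_{b\in\calB_s(\pi,f)}\max\{v_b(\overline G)-\bar V_2(\calB_s(\pi,f),\overline G),0\}.\]
By the second-price property above, for fixed $s$ and $\overline G$ the innermost sum collapses to just the winner's term, namely $\bar V_1(\calB_s(\pi,f),\overline G)-\bar V_2(\calB_s(\pi,f),\overline G)$. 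Adding $\sum_{s\in S}u_s(\pi,f)=\inv{G}\sum_{s\in S}\sum_{\overline G\in\pi_s}\bar V_2(\calB_s(\pi,f),\overline G)$ from \eqref{eq:utility of seller}, the $\bar V_2$ contributions cancel and I am left with $\inv{G}\sum_{s\in S}\sum_{\overline G\in\pi_s}\bar V_1(\calB_s(\pi,f),\overline G)=S\cdot SW(\pi,f)$ by \eqref{eq:social_welfare}; dividing by $S$ yields the lemma.

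The only place requiring care is the collapse of the innermost sum, which I would verify in the boundary cases: if $\calB_s(\pi,f)$ is empty then seller $s$ contributes $0$ on both sides; if it is a singleton there is no second price, so $\bar V_2=0$ under the natural convention $\max\emptyset=0$ and the sole bidder's surplus $v_b(\overline G)$ still equals $\bar V_1-\bar V_2$; and if several buyers tie for the top valuation then $\bar V_1=\bar V_2$, the designated winner $\hat b$ gets surplus $0$, the remaining bidders still get $0$, and the identity is independent of which tied buyer is named $\hat b$. Beyond this small case analysis the proof is a routine rearrangement of finite sums, so I do not expect any genuine obstacle.
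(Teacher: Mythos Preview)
Your proof is correct and follows essentially the same approach as the paper: both establish, for each seller $s$ and each cell $\overline G\in\pi_s$, the identity $\bar V_2(\calB_s,\overline G)+\sum_{b\in\calB_s}\max\{v_b(\overline G)-\bar V_2(\calB_s,\overline G),0\}=\bar V_1(\calB_s,\overline G)$ and then sum over cells and sellers. The only cosmetic difference is that the paper fixes a seller first and sums its buyers' utilities before aggregating, whereas you start from the full buyer sum and regroup by seller; your explicit treatment of the empty, singleton, and tie cases is a welcome addition that the paper leaves implicit.
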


\begin{proof}
Note that for each seller $s$:
\begin{eqnarray*}\label{obs:total_prof_hbid}
 u_s(\pi_s,f)+\sum_{b\in {\calB}_s(\pi,f)}u_b(\pi_s,f) = \\
= \inv{G} \sum_{\bar G \in \pi_s} \left ( \bar V_2({\calB}_s(\pi,f),{\overline G}) +
\sum_{b\in {\calB}_s(\pi,f)} \max\{ v_b({\overline G}) - \bar V_2({\calB}_b(\pi,f),{\overline G}), 0\} \right ) = \\
=\inv{G}\sum_{\bar G \in \pi_s}\max\{ v_b({\overline G})| b\in {\calB}_s(\pi,f\}
=
\inv{G}\sum_{\bar G \in \pi_s}\bar V_1({\calB}_s(\pi,f),{\overline G}).
\end{eqnarray*}

Summing over $s \in S$ and dividing by $S$ yields the result.
\end{proof}

The next proposition has already been observed by ~\cite{EvenDarKearnsWortman2007} and its proof is omitted:

\begin{proposition}\label{one_seller_two_buyers_sw}
Assume $S=1$, then as the seller's partition, $\pi_s$, gets finer the social welfare (weakly) increases.%
\footnote{$\pi_s$ is a refinement of ${\hat \pi}_s$ if any set in ${\hat \pi}_s$ is a union of sets in $\pi_s$.}
\end{proposition}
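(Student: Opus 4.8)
The statement says: with a single seller, refining the seller's partition weakly increases social welfare. Since $S=1$, formula \eqref{eq:social_welfare} reduces to $SW(\pi_s,f) = \inv{G}\sum_{\overline G \in \pi_s} \bar V_1(\calB_s(\pi,f), \overline G)$, and because there is only one seller every buyer must choose it, so $\calB_s(\pi,f) = B$ regardless of $f$. Hence $SW = \inv{G}\sum_{\overline G \in \pi_s}\bar V_1(B,\overline G) = \inv{G}\sum_{\overline G \in \pi_s}\max_{b\in B} v_b(\overline G)$; the buyers' strategies play no role and we only need to compare this quantity across a partition and its refinement.

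The plan is to reduce to a single refinement step: if $\pi'_s$ refines $\pi_s$, one can obtain $\pi'_s$ from $\pi_s$ by a finite sequence of splits, each replacing one block $\overline G$ by two nonempty blocks $\overline G = A \sqcup C$; by transitivity of "$\geq$" it suffices to handle one such split. For a single split, the only changed term in the sum is that $\max_b v_b(A\cup C)$ is replaced by $\max_b v_b(A) + \max_b v_b(C)$ (the $\inv{G}$ normalization and all other blocks are untouched since $|A|+|C|=|A\cup C|$). So the whole claim comes down to the pointwise inequality
\begin{equation*}
\max_{b\in B} v_b(A\cup C) \;\leq\; \max_{b\in B} v_b(A) + \max_{b\in B} v_b(C).
\end{equation*}
This is immediate: $v_b$ is additive over the disjoint union, so for the maximizing buyer $\hat b$ we have $v_{\hat b}(A\cup C) = v_{\hat b}(A) + v_{\hat b}(C) \leq \max_b v_b(A) + \max_b v_b(C)$.

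I do not anticipate a genuine obstacle here — the result is essentially the observation that the max of a sum of additive valuations is dominated by the sum of the maxes. The only things requiring a line of care are (i) noting explicitly that with one seller $\calB_s = B$ so the buyers' equilibrium choice is irrelevant and $SW$ depends only on $\pi_s$, and (ii) setting up the "one split at a time" induction so that comparing a general refinement reduces to the two-block inequality above. Both are routine, which is presumably why the authors omit the proof and attribute the fact to \cite{EvenDarKearnsWortman2007}.
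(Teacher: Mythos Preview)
Your proposal is correct. The paper itself omits the proof entirely and simply attributes the observation to \cite{EvenDarKearnsWortman2007}, so there is no argument in the paper to compare against; the reduction to a single split together with the inequality $\max_b(v_b(A)+v_b(C))\le \max_b v_b(A)+\max_b v_b(C)$ is exactly the natural route and matches what that reference would give.
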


\section{Semi-Pure Subgame Perfect Equilibria} \label{Sec:NE_basic_qualit}

A signalling competition $SC=(B,S,G,V)$ together with a tuple of pure strategies, $\pi$, for the sellers induce a game for the buyers, which we denote by $SC_{\pi}$. A Subgame Perfect Equilibrium~\cite{Selten65} of $SC$ is a (possibly mixed) strategy tuple $(\pi,f)$ such that $f(\tau)$ forms a Nash equilibrium for the game $SC_{\tau}$, for any pure strategy tuple, $\tau$, of the sellers. Furthermore, $\pi_s$ is a best reply to the strategy of the other players $(\pi_{-s},f)$.

A Subgame Perfect Equilibrium is called {\em semi-pure} if buyers use pure strategies.

\subsection{The subgame $SC_\tau$}

We argue that for any pure strategy tuple of the sellers, $\tau$, the induced game $SC_\tau$, is a potential game. Consequently we can conclude it must have a pure Nash equilibrium. In the game $SC_\tau$ each player (buyer) must choose a seller and so the players share the strategy set $S$.  Following ~\cite{MondererShapley1996} we say that $SC_\tau$ is a {\em potential game} if there exists some function
$P \colon S^B \to \reals$ such that for any $s\in S^B$ $\forall b\in B \forall \bar{s_b}, \hat{s_b} \in S$:
\begin{equation}
u_b(\bar s_b, s_{-b}) - u_b(\hat s_b, s_{-b})  = P(\bar s_b, s_{-b}) - P(\hat s_b, s_{-b})
\label{eq:w_poten}
\end{equation}

The function $P$ is called the {\em potential} of the game.

\begin{theorem} \label{The:subgm_buyers_potent}
For any strategy tuple  of the sellers, $\tau$, the game $SC_\tau$ is a potential game and the social welfare is the potential.
\end{theorem}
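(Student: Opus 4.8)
The plan is to verify directly that the social welfare function $SW(\tau, \cdot)$, viewed as a function of the buyers' pure strategy profile $s \in S^B$ (with the sellers' partitions $\tau$ held fixed), satisfies the defining identity \eqref{eq:w_poten} of an exact potential. Since \lemref{lem:SW_is_sum_utilties} already tells us $SW(\tau,f) = \frac{1}{S}\bigl(\sum_s u_s + \sum_b u_b\bigr)$, and the buyers' total utility $\sum_b u_b$ is itself the ``natural'' candidate potential in a congestion-style game, the work is to account for the sellers' revenue terms $\sum_s u_s$ and show that, combined with the buyers' utilities, each buyer's unilateral deviation changes $SW$ by exactly the change in that buyer's own utility.

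First I would fix a buyer $b$ and two choices $\bar s_b, \hat s_b \in S$, and consider the two profiles $(\bar s_b, s_{-b})$ and $(\hat s_b, s_{-b})$. The key observation is that moving $b$ from seller $\hat s_b$ to seller $\bar s_b$ affects the sets ${\calB}_\sigma(\tau, \cdot)$, and hence the summands in \eqref{eq:social_welfare}, only for $\sigma \in \{\bar s_b, \hat s_b\}$; every other seller's buyer set and every other buyer's co-bidder set is unchanged. So the difference $SW(\bar s_b, s_{-b}) - SW(\hat s_b, s_{-b})$ telescopes to a contribution from just these two sellers. By \lemref{lem:SW_is_sum_utilties}, $S \cdot SW$ on seller $\sigma$ equals $u_\sigma + \sum_{b' \in {\calB}_\sigma} u_{b'}$, which by the computation in that lemma's proof equals $\frac{1}{G}\sum_{\bar G \in \tau_\sigma} \bar V_1({\calB}_\sigma, \bar G)$ — i.e. the ``highest-valuation mass'' at seller $\sigma$. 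The plan is then to show that, among the buyers at $\bar s_b$ and at $\hat s_b$ other than $b$, this highest-valuation mass is unchanged by $b$'s deviation (since those co-bidder sets change only by inclusion/exclusion of $b$, and $\bar V_1$ is a max — but here is precisely the subtle point), so that the net change in $S \cdot SW$ is exactly the change in $u_b$ plus any residual. Multiplying by $S$ (absorbed into the constant) and matching against \eqref{eq:w_poten} with $P = SW$ then finishes it.

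The hard part will be handling the interaction between $b$'s presence and the \emph{other} buyers' utilities at the seller $b$ joins or leaves: when $b$ arrives at $\bar s_b$, it may raise $\bar V_2$ for some bundle there, lowering a co-bidder's utility, while simultaneously $b$ itself collects utility $\max\{v_b(\bar G) - \bar V_2({\calB}_b, \bar G), 0\}$. The claim underlying the theorem is that these two effects, together with the symmetric effects at $\hat s_b$, cancel so cleanly that the aggregate $u_\sigma + \sum_{b'} u_{b'}$ — which equals the $\bar V_1$-mass — only registers the change attributable to $b$. I expect the cleanest route is to avoid reasoning about individual co-bidders entirely: use the identity $S\cdot SW(\tau, s) = \frac{1}{G}\sum_\sigma \sum_{\bar G \in \tau_\sigma}\bar V_1({\calB}_\sigma(\tau,s), \bar G)$ from \lemref{lem:SW_is_sum_utilties}, observe that only the $\bar s_b$ and $\hat s_b$ terms move, and argue that for each bundle $\bar G$, $\bar V_1(\text{buyers at }\bar s_b \cup \{b\}, \bar G) - \bar V_1(\text{buyers at }\bar s_b, \bar G) = \max\{v_b(\bar G) - \bar V_2({\calB}_b^{\text{new}}, \bar G), 0\}$ exactly, where ${\calB}_b^{\text{new}}$ is $b$'s co-bidder set after the move — and the symmetric statement at $\hat s_b$ with a minus sign. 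Summing these two bundle-wise identities over $\bar G$ and dividing by $G$ reproduces $u_b(\bar s_b, s_{-b}) - u_b(\hat s_b, s_{-b})$ term by term, which is exactly \eqref{eq:w_poten}. Verifying that bundle-wise $\bar V_1$-increment identity is the one genuinely non-routine step, and it rests on the elementary fact that adding one element $v_b$ to a finite set of reals increases its maximum by exactly $\max\{v_b - (\text{old max}), 0\}$, applied with the set being $\{v_{b'}(\bar G)\}$ over the relevant co-bidders.
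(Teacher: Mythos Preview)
Your proposal is correct and follows essentially the same route as the paper: both arguments reduce to the bundle-wise identity $\bar V_1(A\cup\{b\},\bar G)-\bar V_1(A,\bar G)=\max\{v_b(\bar G)-\bar V_2(A\cup\{b\},\bar G),0\}$, which is exactly the content of the paper's Lemma~\ref{prop:the_rest_total_same}. The only cosmetic difference is that the paper packages this as ``removing buyer $b'$ decreases $SW$ by $u_{b'}$'' and then applies it twice (once at each endpoint of the deviation), whereas you argue the increment identity directly at the two affected sellers and sum.
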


To prove Theorem ~\ref{The:subgm_buyers_potent} we consider the game when player $b'$ is absent. Formally, this is the game $SC(S,B',G,V')$, where $B'= B\setminus\{b'\}$ and
$V' =(v_b^g)_{b \in B'}^{g\in G}$.

\begin{lemma} \label{prop:the_rest_total_same}
If $S = 1$ then for any $b' \in B$, $\ SW(\pi,f)- SW(\pi,f') = u_{b'}(\pi,f)$,
where  $f' =(f)_{b \not= b'}$.
\end{lemma}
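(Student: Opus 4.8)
The plan is to exploit the fact that with a single seller every buyer's seller-choice is forced, so the only thing that changes when we delete buyer $b'$ is that the pool of bidders in each partition cell shrinks from $B$ to $B'=B\setminus\{b'\}$. Concretely, with $S=1$ we have $\calB_s(\pi,f)=\calB_b(\pi,f)=B$ for every buyer $b$, and after the deletion $\calB_s(\pi,f')=B'$. Substituting these into the definition of social welfare \eqref{eq:social_welfare} and the definition of $u_{b'}$ in \eqref{eq:buyer_utility} and multiplying through by $G$, the claimed identity reduces to a sum over the cells $\bar G\in\pi_s$; it therefore suffices to establish, for each $\bar G$, the cell-wise identity
\[
\bar V_1(B,\bar G)-\bar V_1(B',\bar G)=\max\{v_{b'}(\bar G)-\bar V_2(B,\bar G),0\}.
\]

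To prove this cell-wise identity I would first record the elementary fact that $v_{b'}(\bar G)>\bar V_2(B,\bar G)$ holds if and only if $b'$ is the \emph{unique} maximizer of $v_\cdot(\bar G)$ over $B$ (so that in particular $v_{b'}(\bar G)=\bar V_1(B,\bar G)$ and one may take $\hat b=b'$). Granting this, I split into two cases. If $v_{b'}(\bar G)>\bar V_2(B,\bar G)$, then deleting $b'$ removes the unique top bid, so the new maximum over $B'$ is exactly the old second-highest, $\bar V_1(B',\bar G)=\bar V_2(B,\bar G)$; hence the left side equals $v_{b'}(\bar G)-\bar V_2(B,\bar G)>0$, which is the right side. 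If instead $v_{b'}(\bar G)\le\bar V_2(B,\bar G)$, the right side is $0$, and some buyer other than $b'$ attains $\bar V_1(B,\bar G)$ (either $\hat b\ne b'$, or $\hat b=b'$ but then $\bar V_1(B,\bar G)=\bar V_2(B,\bar G)$ is also attained by a second buyer), so $\bar V_1(B',\bar G)=\bar V_1(B,\bar G)$ and the left side is $0$ as well.

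Finally, summing the cell-wise identity over $\bar G\in\pi_s$ and dividing by $G$ gives $SW(\pi,f)-SW(\pi,f')=u_{b'}(\pi,f)$. The only genuine subtlety — and the step I would be most careful about — is the tie handling in the case analysis, since $\bar V_2$ and the distinguished maximizer $\hat b$ are specified only up to an arbitrary choice; the ``if and only if'' observation above is precisely what makes the argument tie-robust. The degenerate case $B=\{b'\}$ is handled by the convention $\max\emptyset=0$, under which both sides of the cell-wise identity equal $v_{b'}(\bar G)$.
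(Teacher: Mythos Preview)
Your proof is correct and follows essentially the same approach as the paper: reduce to the cell-wise identity $\bar V_1(B,\bar G)-\bar V_1(B',\bar G)=\max\{v_{b'}(\bar G)-\bar V_2(B,\bar G),0\}$ and verify it by a two-case analysis. The only cosmetic difference is the threshold used for the split --- the paper compares $v_{b'}(\bar G)$ with $\bar V_1(B',\bar G)$ rather than with $\bar V_2(B,\bar G)$ --- but the two splits are equivalent, and your explicit handling of ties and of the degenerate case $B=\{b'\}$ is, if anything, more careful than the paper's.
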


In words, the decrease in social welfare resulting from the absence of buyer $b'$ is equal the utility of $b'$ in her presence.

\begin{proof}
Resorting to equation ~\ref{eq:social_welfare} it is sufficient to prove that:
$$
\sum_{\bar G \in \pi}\bar V_1(B,\bar G) -
\sum_{\bar G \in \pi}\bar V_1(B',\bar G)
=
\sum_{{\bar G}\in \pi}\max\{ v_{b'}({\overline G}) - \bar V_2(B,{\overline G}), 0\}.
$$

Therefore, it is sufficient to show that for any $\bar G \in \pi$:
$$
\bar V_1(B,\bar G) -
\bar V_1(B',\bar G)
=
\max\{ v_{b'}({\overline G}) - \bar V_2(B,{\overline G}), 0\}.
$$

For any $\bar G\in \pi$ one of the following cases holds:
\begin{itemize}
\item
Case 1: $v_{b'}(\bar G) < \bar V_1(B',\bar G)$ in which case
$\bar V_1(B,\bar G) =  \bar V_1(B',\bar G)$ and
$v_{b'}({\overline G}) \le \bar V_2(B',{\overline G})$ and the assertion follows.
\item
Case 2: $v_{b'}(\bar G) \ge \bar V_1(B',\bar G)$
in which case
$
\bar V_2(B,{\overline G}) = \bar V_1(B',\bar G)
$
and
$
\bar V_1(B,{\overline G}) = v_{b'}({\overline G})
$, and the assertion follows.
\end{itemize}
\end{proof}

We return to the proof of Theorem ~\ref{The:subgm_buyers_potent}:

\begin{proof}
Let $\tau$ be a pure strategy of the sellers and let $f$ and $g$ be two strategy tuples for the buyers satisfying: (1) $f_b=g_b \ \forall b \not = b'$;  and (2) $f_{b'}(\tau)\not = g_{b'}(\tau)$.
We shall prove that

$$
u_{b'}(\tau,f)-u_{b'}(\tau,g) = SW(\tau,f)-SW(\tau,g).
$$

Let $B' = B \setminus\{b'\}$ and consider a game with the set of buyers $B'$, with the strategy tuple
$f'=g' = (f_b)_{b \not = b'}$.
By applying Lemma \ref{prop:the_rest_total_same} twice we have:
$$
SW(\pi,f)-u_{b'}(\pi,f) = SW(\pi,f')= SW(\pi,g') = SW(\pi,g)-u_{b'}(\pi,g)
$$
and the assertion follows.
\end{proof}
It is well known that any potential game has a pure \NE, and so the following is immediate:

\begin{corollary}\label{corol:buy_steady}
For any pure strategy tuple of the sellers, $\tau$, the game $SC_{\tau}$ has a pure \NE.
\end{corollary}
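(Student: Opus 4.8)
The plan is to derive Corollary~\ref{corol:buy_steady} as an immediate consequence of Theorem~\ref{The:subgm_buyers_potent} together with the classical existence result for potential games. Since Theorem~\ref{The:subgm_buyers_potent} asserts that for any pure strategy tuple $\tau$ of the sellers the induced game $SC_\tau$ is a potential game (with $SW(\tau,\cdot)$ serving as the potential function), it suffices to invoke the standard fact, due to Monderer and Shapley~\cite{MondererShapley1996}, that every finite potential game possesses a pure Nash equilibrium.

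First I would recall why the finiteness hypothesis is met: in the game $SC_\tau$ each of the finitely many buyers in $B$ chooses among the finite set $S$ of sellers, so the joint strategy space $S^B$ is finite. Consequently the potential function $P = SW(\tau,\cdot)$ attains a maximum over $S^B$ at some pure profile $f^\ast$. Next I would argue that this maximizer is a pure Nash equilibrium of $SC_\tau$: for any buyer $b$ and any alternative choice $\hat s_b \in S$, the defining property~\eqref{eq:w_poten} of the potential gives $u_b(f^\ast_b, f^\ast_{-b}) - u_b(\hat s_b, f^\ast_{-b}) = P(f^\ast_b, f^\ast_{-b}) - P(\hat s_b, f^\ast_{-b}) \ge 0$, the last inequality holding because $f^\ast$ maximizes $P$. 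Hence no buyer can profitably deviate, which is exactly the statement that $f^\ast$ is a pure Nash equilibrium.

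There is essentially no obstacle here; the only thing to be careful about is that Theorem~\ref{The:subgm_buyers_potent} is stated for an arbitrary but fixed pure strategy tuple $\tau$ of the sellers, so the conclusion is correctly quantified as ``for any pure strategy tuple $\tau$, the game $SC_\tau$ has a pure Nash equilibrium.'' Since the paper has already done the substantive work of exhibiting the potential in Theorem~\ref{The:subgm_buyers_potent}, the proof of the corollary is one line: apply the cited existence theorem for potential games to each $SC_\tau$.
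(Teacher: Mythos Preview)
Your proposal is correct and follows exactly the paper's approach: the paper simply notes that the corollary is immediate from Theorem~\ref{The:subgm_buyers_potent} together with the well-known fact that every (finite) potential game has a pure Nash equilibrium. Your additional remarks on finiteness and on why a potential maximizer is a pure NE are a harmless elaboration of that one-line argument.
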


This reflects on our model of a signalling competition, as follows:

\begin{theorem}
Any signalling competition contains a semi-pure \SPE{}.
\end{theorem}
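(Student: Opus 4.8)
The plan is to combine Corollary~\ref{corol:buy_steady} with a standard backward-induction argument on a suitably finite reformulation of the sellers' game. First I would observe that, although the sellers' pure strategy set $\Pi$ is finite, the buyers' pure strategy set is the set of functions $f_b\colon\Pi^S\to S$, which is also finite; hence the whole two-stage game has a finite strategy space and in particular only finitely many pure strategy tuples $\tau$ for the sellers to range over. For each such $\tau$, Corollary~\ref{corol:buy_steady} hands us a pure Nash equilibrium $f^\tau$ of the subgame $SC_\tau$. The natural candidate strategy tuple for the buyers is then the ``profile of subgame responses'' $f^\ast$ defined by $f^\ast_b(\tau) \defas f^\tau_b$ for every pure $\tau\in\Pi^S$; by construction $f^\ast(\tau)$ is a Nash equilibrium of $SC_\tau$ for every $\tau$, which is exactly the first requirement in the definition of an \SPE{}.

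Next I would produce the sellers' equilibrium strategies. Having fixed $f^\ast$, consider the induced normal-form game among the $S$ sellers whose strategy sets are all $\Pi$ and whose payoff to seller $s$ at the pure profile $\tau$ is $u_s(\tau, f^\ast(\tau))$ as given by \eqnsref{eq:utility of seller}. This is a finite normal-form game, so it has a mixed Nash equilibrium $\pi^\ast$; let me take $\pi^\ast$ to be such an equilibrium. I then claim $(\pi^\ast, f^\ast)$ is the desired semi-pure \SPE{}. The buyer-side condition holds because $f^\ast(\tau)$ is a Nash equilibrium of $SC_\tau$ for \emph{every} pure $\tau$, in particular for those in the support of $\pi^\ast$ and for every unilateral deviation $\tau=(\tau_s,\pi^\ast_{-s})$; and $f^\ast$ is pure, so the equilibrium is semi-pure. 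The seller-side condition --- that each $\pi^\ast_s$ is a best reply to $(\pi^\ast_{-s}, f^\ast)$ --- is precisely the statement that $\pi^\ast$ is a Nash equilibrium of the induced sellers' game, which holds by choice of $\pi^\ast$. One small point to spell out is that when seller $s$ deviates to some $\tau_s$, the buyers' continuation play is $f^\ast(\tau_s,\pi^\ast_{-s})$, which is again an equilibrium of the corresponding subgame, so the deviation payoff computed in the induced sellers' game coincides with the true continuation payoff; this is exactly why restricting to the profile $f^\ast$ loses nothing.

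The main obstacle --- really the only place that needs care --- is the existence of a (possibly mixed) Nash equilibrium $\pi^\ast$ in the induced sellers' game, together with the measurability/consistency of gluing the $f^\tau$ into a single $f^\ast$. The latter is a non-issue here because $\Pi^S$ is finite, so $f^\ast$ is just a finite table of choices and no selection-theoretic subtlety arises; the former is immediate from Nash's theorem since the sellers' game is finite. I would also remark that the definition of \SPE{} in \sectnref{Sec:NE_basic_qualit} only demands that $f(\tau)$ be a Nash equilibrium of $SC_\tau$ for every pure $\tau$ and that each $\pi_s$ be a best reply to $(\pi_{-s},f)$ --- it does not require optimality of the $f_b$ off the equilibrium path in any stronger sense --- so the construction above meets the definition verbatim. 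Hence every signalling competition has a semi-pure \SPE{}, completing the proof.
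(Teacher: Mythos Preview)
Your proof is correct and follows essentially the same approach as the paper: fix a pure buyers' equilibrium $f(\tau)$ in each subgame $SC_\tau$ via Corollary~\ref{corol:buy_steady}, then take a (possibly mixed) Nash equilibrium $\pi$ in the finite induced sellers' game and verify that $(\pi,f)$ is a semi-pure \SPE{}. The paper's proof is a terse sketch of exactly this construction; your version simply spells out the finiteness of $\Pi^S$, the gluing of the $f^\tau$, and the appeal to Nash's theorem more carefully.
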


\begin{proof}
For any pure strategy of the sellers, $\tau$ set $f(\tau)$ to be the pure Nash equilibrium of $SC_\tau$, which exists according to Corollary~\ref{corol:buy_steady}. Fixing the buyers strategies induces a simultaneous move game for the sellers. Let $\pi$ be the (possibly mixed) Nash equilibrium of this game. We argue that ($\pi,f)$ is the desired semi-pure a \SPE{} of the original game.
\end{proof}

Let $\SPE(SC)$ denote the set of all semi-pure subgame perfect equilibria of $SC$. $\SPE(SC)\not = \emptyset$.

\section{Social Welfare in a Monopoly}  \label{Sec:monopoly}

We now study the single seller case. In this game buyers' strategy set is degenerated and so, in fact, we have single player game.

\begin{proposition}\label{one_seller_two_buyers_util}
Assume $S=1$ and $B=2$ then as the seller's partition, $\pi_s$, gets finer, buyer's utility (weakly) increases and the seller's utility (weakly) decreases.
\end{proposition}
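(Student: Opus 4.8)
The plan is to reduce the claim to a statement about a single "elementary refinement" — splitting one block of the partition into two — and then verify the monotonicity of each player's utility under such a split, since any refinement is obtained by a finite sequence of such steps. Fix the single seller $s$ with partition $\pi_s$, and suppose $\hat\pi_s$ is obtained from $\pi_s$ by replacing a block $\bar G \in \pi_s$ with two nonempty blocks $\bar G_1, \bar G_2$ with $\bar G = \bar G_1 \sqcup \bar G_2$. All other blocks are untouched, so all corresponding terms in the utility formulas \eqref{eq:buyer_utility} and \eqref{eq:utility of seller} are unchanged; it suffices to compare the contribution of $\bar G$ before the split with the combined contribution of $\bar G_1, \bar G_2$ after the split. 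Because $B=2$, write the two buyers as $b$ and $b'$; the relevant quantities are $v_b(\bar G) = v_b(\bar G_1)+v_b(\bar G_2)$ and likewise for $b'$, with $\bar V_1$ the larger of the two buyer valuations on a block and $\bar V_2$ the smaller (here $\bar V_2(\{b,b'\},\bar G')=\min\{v_b(\bar G'),v_{b'}(\bar G')\}$ since there are exactly two buyers).

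**For the seller,** by \eqref{eq:utility of seller} the relevant comparison (up to the common factor $1/G$) is between $\min\{v_b(\bar G),v_{b'}(\bar G)\}$ and $\min\{v_b(\bar G_1),v_{b'}(\bar G_1)\} + \min\{v_b(\bar G_2),v_{b'}(\bar G_2)\}$. The second expression is at most the first by superadditivity of $\min$ over a disjoint union (equivalently, $\min\{x_1+x_2,y_1+y_2\}\ge \min\{x_1,y_1\}+\min\{x_2,y_2\}$), so the seller's utility weakly decreases. **For the buyer $b$,** the term contributed by $\bar G$ is $\max\{v_b(\bar G)-\bar V_2(\{b,b'\},\bar G),0\}$, and after the split it becomes $\sum_{i=1,2}\max\{v_b(\bar G_i)-\bar V_2(\{b,b'\},\bar G_i),0\}$; since $\max\{x,0\}$ is convex and $v_b(\bar G)-\bar V_2(\bar G) = \sum_i (v_b(\bar G_i)-\bar V_2(\bar G_i))$ — using exactly the subadditivity of $\min$ just noted — the split value is at least the original, so $b$'s utility weakly increases, and symmetrically for $b'$.

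**Finally,** I would note that this already gives the result: iterating elementary refinements shows that for any refinement $\hat\pi_s$ of $\pi_s$ the seller's utility is weakly smaller and each buyer's utility is weakly larger. (Alternatively, one can bypass the iteration by invoking Lemma~\ref{lem:SW_is_sum_utilties} together with Proposition~\ref{one_seller_two_buyers_sw}: the sum $u_s+u_b+u_{b'}$ equals $S\cdot SW = SW$, which weakly increases under refinement, so it remains only to show $u_s$ weakly decreases, and then the buyers' total weakly increases; to split that total between the two buyers one still argues block-by-block as above.)

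**The main obstacle** I anticipate is purely bookkeeping rather than conceptual: one must be careful that $\bar V_2$ on a sub-block $\bar G_i$ is genuinely $\min\{v_b(\bar G_i),v_{b'}(\bar G_i)\}$ — this is where $B=2$ is essential, as for $B\ge 3$ the "runner-up" need not be the same buyer across sub-blocks and the clean superadditivity/convexity argument breaks — and to handle the edge case where one of $\bar G_1,\bar G_2$ contributes a negative (hence zeroed-out) surplus to a buyer, which is exactly what the convexity of $\max\{\cdot,0\}$ absorbs.
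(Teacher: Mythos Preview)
Your approach is essentially the same as the paper's: with $B=2$ one has $\bar V_2(\{b,b'\},\bar G')=\min\{v_b(\bar G'),v_{b'}(\bar G')\}$, so the seller's utility is a sum of $\min$'s (superadditive under block-splitting, hence weakly decreasing) and each buyer's utility is a sum of terms $\max\{v_b-v_{b'},0\}$ (subadditive under splitting, hence weakly increasing). The paper simply writes down these two formulas and asserts the monotonicity; your elementary-refinement framing just makes the implicit induction explicit.

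One small slip to fix: the displayed equality $v_b(\bar G)-\bar V_2(\bar G)=\sum_i\bigl(v_b(\bar G_i)-\bar V_2(\bar G_i)\bigr)$ is in general only an inequality ($\le$), precisely because $\min$ is superadditive rather than additive (you also wrote ``subadditivity'' there where you meant ``superadditivity''). This does not damage the argument, since with $B=2$ the quantity $v_b-\bar V_2=v_b-\min\{v_b,v_{b'}\}$ is already nonnegative, so the $\max\{\cdot,0\}$ wrapper is redundant and the buyer's monotonicity follows directly from that $\le$ inequality; the appeal to convexity of $\max\{\cdot,0\}$ is unnecessary.
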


\begin{proof}
Denote $B = \set{a, b}$. Now  ${u}_{a}
= \inv{G}\sum_{\bar G\in \pi_s}{\max\set{v_a(\bar G) - v_b(\bar G), 0}} $
which (weakly) increases as $\pi_s$ get finer.

The seller's utility is
${u}_s = \inv{G}\sum_{\bar G\in \pi_s}{\min\set{v_a(\bar G), v_b(\bar G)}}$,
which (weakly) decreases as $\pi_s$ get finer.
\end{proof}

The claims in Proposition \ref{one_seller_two_buyers_util}
do not hold for the case of more than two buyers. We demonstrate this with the following example:
\begin{example}\label{example_monopoly}
Consider the case of one seller, $3$ buyers and $3$ variants for the good. Let the valuation matrix be:
\begin{equation*}
V =
\begin{bmatrix}
1 & 0 & 1\\
1 & 1 & 0\\
0 & 1 & 1
\end{bmatrix}
\end{equation*}
where the $b^{th}$ row represents buyer $b$ and the $g^{th}$ column represents good $g$.
Consider the partition $\pi = \set{\set{1, 2}, \set{3}}$. If the good
is in the set $\set{1, 2}$, the second buyer gets the good and pays $\frac{1}{2}$. In case the good is in $\set{3}$, either buyer 1 or 3 get the good but her full valuation of 1.
Therefore $u_2(\pi)= \frac{2}{3}\frac{1}{2}+\frac{1}{3}0 = \frac{1}{3}$, the seller earns $u_s(\pi)= \frac{2}{3}\frac{1}{2}+\frac{1}{3}1 =  \frac{2}{3}$ and the social welfare is $SW(\pi)= 1$.

Compare this with the finest partition $\tau = \set{\set{1}, \set{2}, \set{3}}$,
where $u_2(\tau)=0 < u_2(\pi)$, $u_s(\tau)=1 > u_s(\pi)$.
\end{example}

As argued, more information leads to increased efficiency. In some cases (e.g., example
~\ref{example_monopoly}) a revenue-maximizing seller would strategically choose to reveal all the information and maximal efficiency would prevail. Unfortunately, this is not always the case.

Let us denote by $\opt{(SC)} = \max_{(\pi,f)}SW(\pi,f)$ the optimal social welfare possible.

\begin{theorem} \label{The:SW_third_opt}
When $S=1$ the efficiency obtained in any semi-pure SPE is at least $\frac{1}{3}\opt$, and this bound is tight.
\end{theorem}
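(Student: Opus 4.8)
The plan is to exploit that, when $S=1$, the buyers' strategy set is a singleton, so a semi-pure SPE is simply a revenue-maximizing partition $\pi^{*}$ for the lone seller; moreover a mixed best reply only averages pure revenue-maximizers, so it is enough to bound $SW(\pi^{*})$ for an arbitrary \emph{pure} maximizer of $u_s$. Write $G^{+}\defas\{g\in G:\exists b,\ v_b^g=1\}$ for the set of goods wanted by someone. By Proposition~\ref{one_seller_two_buyers_sw} the optimum is attained at the finest partition, so $\opt(SC)=|G^{+}|/G$; also $SW(\pi^{*})=\frac1G\sum_{\bar G\in\pi^{*}}\bar V_1(B,\bar G)$ while $|G^{+}|=\sum_{\bar G\in\pi^{*}}|\bar G\cap G^{+}|$. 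Hence the theorem's lower bound follows from a per-cluster statement: \emph{for every cluster $\bar G$ of a revenue-maximizing partition, $\bar V_1(B,\bar G)\ge\tfrac13|\bar G\cap G^{+}|$}; summing over $\bar G\in\pi^{*}$ and dividing by $G$ gives $SW(\pi^{*})\ge\tfrac13\opt$.

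To prove the per-cluster statement, note first that a good wanted by nobody is inert: moving it between clusters changes neither the $\bar V_1$, $\bar V_2$ of those clusters nor the seller's revenue, since the revenue a cluster $\bar H$ contributes is $\bar V_2(B,\bar H)/G$, independent of $|\bar H|$. So we may assume $\bar G\subseteq G^{+}$ and must show $\bar V_1(B,\bar G)\ge|\bar G|/3$. Suppose not; as $\bar G$ contains a wanted good, $\bar V_1(B,\bar G)\ge1$, so $m\defas|\bar G|\ge4$. For each buyer $b$ let $D_b\subseteq\bar G$ be the goods wanted by $b$ alone; the $D_b$ are pairwise disjoint and $|D_b|\le v_b(\bar G)\le\bar V_1(B,\bar G)<m/3$. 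Consider the graph on $\bar G$ whose non-edges are exactly the pairs contained in a common $D_b$: it is complete multipartite with all parts (the $D_b$'s and singletons for the remaining, multiply-wanted goods) of size $<m/3<m/2$, hence has a matching of size $\floor{m/2}$. Each matched pair $\{g,g'\}$ has its two goods wanted by distinct buyers, so $\bar V_2(B,\{g,g'\})\ge1$; this matching together with singletons for any leftover good is a refinement $\rho$ of $\bar G$ with $\sum_{\bar H\in\rho}\bar V_2(B,\bar H)\ge\floor{m/2}$. Since $\bar V_2(B,\bar G)\le\bar V_1(B,\bar G)<m/3$, a short arithmetic check gives $\floor{m/2}>\bar V_2(B,\bar G)$ for every $m\ge4$, so replacing $\bar G$ by $\rho$ strictly increases the seller's revenue --- contradicting the maximality of $\pi^{*}$.

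For tightness, take $S=1$ with three goods, three buyers, and $V$ the $3\times3$ identity matrix (buyer $b$ wants only good $b$). The finest partition yields $SW=1$, so $\opt=1$. The full bundle $\pi=\{\{1,2,3\}\}$ has $\bar V_2(B,\{1,2,3\})=1$, giving revenue $\tfrac13$, exactly the revenue of each ``pair plus singleton'' partition and no worse than the finest ($\tfrac13$ vs.\ $0$) --- so $\pi$ is a best reply for the seller, hence $(\pi,f)$ with the trivial $f$ is a semi-pure SPE, and in it $SW(\pi)=\tfrac13\bar V_1(B,\{1,2,3\})=\tfrac13=\tfrac13\opt$.

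The crux is the per-cluster statement, and within it the combinatorial core: recognizing that the only obstruction to splitting a cluster profitably is a large set of goods all wanted by one and the same buyer, bounding the size of such sets by $\bar V_1$, and thereby pinning the constant at exactly $3$ --- so that ``thin'' clusters of size $\ge4$ are always refinable while a size-$3$ one (as in the tight example) need not be.
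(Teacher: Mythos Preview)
Your proof is correct, and the tightness example coincides with the paper's. The core argument, however, takes a genuinely different route. Both proofs reduce (the paper does so by saying ``WLOG $\pi$ is degenerate and $\bar G=G$'') to showing that no cluster $\bar G$ of a revenue-maximizing partition can satisfy $\bar V_1(B,\bar G)<\tfrac13|\bar G\cap G^+|$, and both derive a contradiction by exhibiting a strictly more profitable refinement of such a $\bar G$. The paper's refinement is a single two-way split: it picks buyers $a,b$ realizing $\bar V_1(G),\bar V_2(G)$, sets $G'=\{g:v_a^g+v_b^g\ge1\}$ and $G''=G\setminus G'$, notes $\bar V_2(G')=\bar V_2(G)$, and from $p^1(G)>3\bar V_1(G)$ deduces $p^1(G'')>\bar V_1(G)\ge\bar V_1(G'')$, so at least two distinct buyers want something in $G''$, giving $\bar V_2(G'')>0$ and a strict revenue gain. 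Your refinement is much finer: you break $\bar G$ into $\lfloor m/2\rfloor$ pairs via a matching in the complete multipartite graph whose parts are the $D_b$'s and the multiply-wanted singletons, and then compare $\lfloor m/2\rfloor$ to $\bar V_2(\bar G)<m/3$. The paper's argument is shorter and uses no graph theory; yours is heavier (it leans on the standard fact that a complete multipartite graph on $m$ vertices with all parts of size at most $\lfloor m/2\rfloor$ has a near-perfect matching, which you invoke without proof) but yields the stronger structural conclusion that a ``thin'' cluster actually decomposes into revenue-$1$ pairs --- more than is needed, yet it isolates the constant $3$ by the same mechanism and the same tight example.
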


For the proof we make use of the following notation. For any subsets $\overline{G}
\subseteq G$ and $\overline{B} \subseteq B$ let
$p^i{(\overline{B}, \overline{G})} \defas
\abs{\set{g\in \overline{G}| \sum_{b \in \overline{B}}v_b^g \ge i}}$ denote the number of goods in $\overline{G}$ for which there are at least $i$ buyers in $\overline{B}$ that value them. In particular $p^1{(\overline{B}, \overline{G})}$ is the number of goods for which there is some demand and $p^2{(\overline{B}, \overline{G})}$ is the number of goods for which there are at least two interested buyers. We abbreviate $p^i{(B, \overline{G})}$ by $p^i{(\overline{G})}$.

\begin{proof}
Note it is enough to prove the theorem for any pure strategy of the monopolist in our single player game.

Suppose to the contrary, that for some valuation matrix $V$ there is a revenue maximizing partition, $\pi$, such that $SW(\pi) < \frac{1}{3} \opt(SC)$. In this case there must exist some ${\bar G}\in \pi$, such that
$\frac{{\bar V}_1(\bar G)}{\bar G} <
\frac{1}{3}\cdot \frac{p^1({\bar G})}{\bar G}$.
Without loss of generality we can assume that $\pi$ is degenerated and $\bar G = G$. Therefore
\begin{eqnarray}
p^1({G}) > 3\cdot {\bar V}_1(G).    \label{eq:less_3_opt}
\end{eqnarray}


Let ~$a \neq b$ be two distinct buyers satisfying
$v_a({G}) = {\bar V}_1(G)$ and
$v_b({G}) = {\bar V}_2(G)$.
Define ${G}' \defas\set{ g \in G | v_b^g+v_a^g \ge1} \subseteq {G}$,
the set of goods wanted by $a$ or $b$,
and let ${G}'' \defas {G} \setminus {G}'$.
Note that by construction $p^1({G}) = p^1({G}')+ p^1({G}'')$ and
$p^1({G'}) \leq 2{\bar V}_1(G)$. Combining this with inequality
~\ref{eq:less_3_opt} implies $p^1({G}'') > {\bar V}_1({G})$. Hence $p^1({G}'')> {\bar V}_1({G}'')$, which in turn implies that ${\bar V}_2({G}'') > 0$.

Based on equation \ref{eq:utility of seller} we compute the revenue from partitioning $G$ into ${G}'$ and ${G}''$:
\begin{eqnarray*}
\frac{{\bar V}_2({G'})+{\bar V}_2({G}'')}{G}
=   \frac{{\bar V}_2({G})+{\bar V}_2({G}'')}{G}
> \frac{{\bar V}_2({G})}{G},
\end{eqnarray*}

where the equality ${\bar V}_2({G}') = {\bar V}_2({G})$ follows from the construction of ${G}'$. As the right hand side corresponds to the original revenue we have arrived at a contradiction.

Finally, to prove this bound is tight consider the following valuations:
\begin{equation*}
V =
\begin{bmatrix}
1 & 0 & 0\\
0 & 1 & 0\\
0 & 0 & 1
\end{bmatrix}
\end{equation*}

The seller maximizes revenue by disclosing nothing. The ensuing
social welfare is $\frac{1}{3}$, while the optimal welfare, obtained when all information is disclosed, is $\opt = 1$.
\end{proof}

Whereas, Theorem ~\ref{The:SW_third_opt} bounds the welfare in the worst possible outcome, it turns out, as has already been shown by Ghosh et al \cite{Ghosh} that there exists some revenue maximizing partition that yields a better outcome:

\begin{theorem}\label{The:SW_third_opt_and_sw}
Let $S=1$, then there exists some seller strategy, $\pi$, such that  $\pi$ maximizes the seller's  profit
and $SW(\pi) \ge \frac{1}{2}\opt$.
Moreover, this bound is tight.
\end{theorem}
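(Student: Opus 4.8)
The plan is to argue by construction. Fix a valuation matrix $V$ and a revenue-maximizing partition $\pi$. If already $SW(\pi)\ge\frac12\opt$ we are done, so assume some block gives low welfare; as in the proof of Theorem~\ref{The:SW_third_opt} we may restrict attention to a single block, and in fact it suffices to work block-by-block and show that every block of a suitably chosen maximal-revenue partition contributes welfare at least half of its ``optimal'' contribution $p^1(\overline G)/G$. So the real claim is: among all profit-maximizing partitions there is one such that for each block $\overline G$ in it, $\bar V_1(\overline G)\ge\frac12 p^1(\overline G)$.

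First I would take $\pi$ to be a profit-maximizing partition that is \emph{finest} among all profit-maximizing partitions (i.e., no block can be split into two nonempty pieces without strictly decreasing the seller's revenue). Then I would fix a block $G=\overline G$ and run the same splitting argument used in Theorem~\ref{The:SW_third_opt}: pick buyers $a,b$ with $v_a(G)=\bar V_1(G)$, $v_b(G)=\bar V_2(G)$, let $G'=\{g: v_a^g+v_b^g\ge 1\}$ and $G''=G\setminus G'$. The computation there shows splitting into $G',G''$ changes revenue by $\frac1G(\bar V_2(G'')-0)=\frac1G\bar V_2(G'')\ge 0$, with strict inequality whenever $\bar V_2(G'')>0$. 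Since $\pi$ is a profit-maximizing partition and cannot be refined without losing revenue, this forces $\bar V_2(G'')=0$, i.e. on $G''$ at most one buyer wants each good, so $p^1(G'')=p^1(G)-p^1(G')$ is bounded using that each good in $G''$ is wanted by a unique buyer. Combining $p^1(G')\le v_a(G)+v_b(G)\le 2\bar V_1(G)$ with the fact that the goods of $G''$ are ``privately'' demanded, I would conclude $p^1(G)\le 2\bar V_1(G)$, hence $\bar V_1(G)\ge\frac12 p^1(G)$; summing the numerators over all blocks and dividing by $G$ gives $SW(\pi)\ge\frac12\opt$ by \eqnsref{eq:social_welfare}.

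The subtle point — and what I expect to be the main obstacle — is the treatment of $G''$: I need that its privately-demanded goods do not push $p^1(G)$ past $2\bar V_1(G)$. The naive bound $p^1(G'')\le\sum_{c\ne a,b}v_c(G'')$ is too weak. The fix is to note that on $G''$ each good is wanted by \emph{at most one} buyer (from $\bar V_2(G'')=0$), so no block refinement inside $G''$ changes revenue either; hence I should really take $\pi$ finest subject to maximizing revenue and argue that such a $\pi$ has \emph{every} block with $p^2(\overline G)=\bar V_2(\overline G)$ at most the revenue — more precisely that each maximal block $\overline G$ satisfies $p^1(\overline G)\le 2\bar V_1(\overline G)$ directly, because otherwise the splitting move above is revenue-weakly-improving \emph{and} strictly refining, contradicting minimality of block count among maximizers. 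I would then handle the tightness direction with a small explicit example (e.g.\ two buyers and two goods with $V=\left[\begin{smallmatrix}1&1\\1&0\end{smallmatrix}\right]$-type valuations, where the unique profit-maximizing partition bundles both goods, yielding $SW=\frac12$ against $\opt=1$), checking by direct computation from \eqnsref{eq:utility of seller} and \eqnsref{eq:social_welfare} that revenue is indeed maximized only by the coarse partition.
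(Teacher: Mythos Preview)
Your main argument works: choosing a profit-maximizing partition that is \emph{finest} (largest block count) among all profit maximizers and applying the $G',G''$ split blockwise is a valid route. Your ``fix'' paragraph contains the key step --- if $G''\ne\emptyset$ the split is revenue-preserving and strictly refining, contradicting the choice of $\pi$ --- and this forces $G''=\emptyset$, whence $p^1(\overline G)=p^1(G')\le v_a(\overline G)+v_b(\overline G)\le 2\bar V_1(\overline G)$ for every block. Note that your first pass deduces only $\bar V_2(G'')=0$; that is too weak, and in fact finest-ness already gives $G''=\emptyset$ outright, as your fix implicitly recognizes. Also, ``minimality of block count'' should read ``maximality''. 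The paper's route is closely related but selects instead the profit-maximizer with \emph{maximal social welfare} and argues by contradiction that the same split would strictly raise welfare at no revenue cost (since then $\bar V_1(G'')>0$); either secondary criterion does the job, and both rely on the identical $G',G''$ decomposition.

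Your tightness example, however, is wrong. For $V=\left(\begin{smallmatrix}1&1\\1&0\end{smallmatrix}\right)$ one has $\bar V_1(G)=2$, so the coarse partition already yields $SW=\bar V_1(G)/G=1=\opt$, not $\frac12$; moreover the fine partition also achieves revenue $\frac12$, so the coarse partition is not even the unique maximizer. The example you want is $V=\left(\begin{smallmatrix}1&0\\0&1\end{smallmatrix}\right)$: here $\{G\}$ is the \emph{only} revenue maximizer (revenue $\frac12$ versus $0$ for full disclosure), and it gives $SW=\frac12$ against $\opt=1$.
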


\section{Social Welfare in a Competition}   \label{Sec:competition}

When there is more than a single seller we can also expect less-than-optimal social welfare. In fact, the following example demonstrates that the ratio between welfare in a ~\SPE ~ and optimal welfare can be as low as $\frac{1}{G}$:

\begin{example} \label{examp:low_in_SPE}
There are $S = 2$ sellers, $G$ goods and
$B = 2 \cdot G$ buyers with the following matrix:%
\footnote{This example can be generalized to an arbitrary number of sellers, $S\ge 2$,
 using a valuation matrix which is defined by $S$ vertical repetitions of the $G \times G$ identity matrix.}
\begin{equation*}
V =
\begin{bmatrix}
1           & 0             & \ldots    &   0               &   0\\
0           & 1             & 0             & \ldots    &   0\\
\vdots  &   \cdots  &   \ddots  & \cdots    & \vdots\\
0           & \ldots    & 0             &   1               & 0\\
0           & \ldots    &   0               & 0             &   1\\

1           & 0             & \ldots    &   0               &   0\\
0           & 1             & 0             & \ldots    &   0\\
\vdots  &   \cdots  &   \ddots  & \cdots    & \vdots\\
0           & \ldots    & 0             &   1               & 0\\
0           & \ldots    &   0               & 0             &   1
\end{bmatrix}
\end{equation*}

The optimal social welfare is obtained when the two sellers
disclose everything, buyers $1,\ldots,G$ choose seller 1 and buyers
$G+1,\ldots,2G$ choose seller 2. In this case the social welfare is 1.

On the other hand, consider the following strategy profile:

\begin{itemize}
    \item
    Sellers disclose nothing.
    \item Buyers play as follows:
    \begin{itemize}
        \item
If no seller deviates then buyers $1,\ldots,G$ choose seller 1 and buyers
$G+1,\ldots,2G$ choose seller 2.
    \item
If seller 1 deviates to partition $\pi_1$ then for any $G' \in \pi_1$ let
$B'=\{b\le G:V_b(G')>0\}$ be the set of buyers which assign $g'$ a positive valuation. An arbitrary buyer in $B'$ chooses seller 1 while the other choose seller 2. In addition, buyers $G+1,\ldots,2G$ choose seller 2.
\item
If seller 2 deviates then use a similar profile.
    \item
If the two sellers deviate play an arbitrary pure Nash equilibrium in the induced
        subgame (the existence of which follows immediately from Theorem \ref{The:subgm_buyers_potent}).
        \end{itemize}
\end{itemize}

We argue that this strategy profile constitutes a \SPE: Clearly a unilateral deviation of a seller cannot be profitable.
As to the buyers, the only non-obvious case is when a single seller, say seller 1 without loss of generality, deviates. Note that buyers that have chosen seller 1 will get positive utility whereas deviation  to seller 2 yields a utility of zero. On the other hand any buyer that has chosen player 2 will necessarily get a utility of zero when going to seller 1.

In this \SPE ~ the social welfare is $\frac{1}{G}$ and so the ratio between this and the optimum is $\frac{1}{G}$.
\end{example}

It turns out that $\frac{1}{G}$ is the lower bound on welfare loss if there are sufficiently many buyers:

\begin{theorem} \label{rem:examp_tight}
For $S \geq 2$ sellers, $G$ goods and $B \geq S$ buyers satisfying $v_b(G)>0 \ \forall b\in B$ (in  words, each buyer has a positive demand), $\ \frac{SW(\pi,f)}{OPT} \ge \frac{1}{G}$ for all semi-pure SPE $(\pi,f)$.
\end{theorem}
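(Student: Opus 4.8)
The plan is to derive the bound from two independent estimates, $\opt \le 1$ and $SW(\pi,f)\ge \frac1G$ for every semi-pure SPE $(\pi,f)$, and then divide. The first estimate is immediate and uses no equilibrium reasoning: in any profile $(\rho,h)$, each cell $\bar G$ of a seller's partition satisfies $\bar V_1(\calB_s(\rho,h),\bar G)\le |\bar G|$ because all valuations lie in $\{0,1\}$; summing over the cells of $\rho_s$ gives at most $G$ for each seller, hence $SW(\rho,h)\le 1$, and therefore $\opt\le 1$ (a maximum, and in the mixed case an expectation, of such numbers).

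For the second estimate I would first reduce to a pure seller profile. Since $(\pi,f)$ is a semi-pure SPE, $SW(\pi,f)$ equals the expectation over pure seller tuples $\tau$ in the support of $\pi$ of $SW(\tau,f)$, and by the definition of an SPE the buyer profile $f$ restricted to $SC_\tau$ is a Nash equilibrium for each such $\tau$; so it suffices to show $SW(\tau,f)\ge\frac1G$ whenever $\tau$ is pure and $f(\tau)$ is a Nash equilibrium of $SC_\tau$. I would then split on whether every seller attracts a buyer. If $\calB_s(\tau,f)\ne\emptyset$ for all $s$, pick $b_s\in\calB_s(\tau,f)$; then $\bar V_1(\calB_s(\tau,f),\bar G)\ge v_{b_s}(\bar G)$ on every cell, so $\sum_{\bar G\in\tau_s}\bar V_1(\calB_s(\tau,f),\bar G)\ge v_{b_s}(G)\ge 1$ using the hypothesis $v_b(G)>0$ and integrality; summing over the $S$ sellers and dividing by $SG$ gives $SW(\tau,f)\ge\frac1G$. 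If instead some seller $s_0$ has no buyer, then a buyer $b$ who unilaterally switches to $s_0$ becomes the sole bidder there, so the second price $\bar V_2(\{b\},\bar G)$ is $0$ on every cell of $\tau_{s_0}$ and $b$ would obtain utility $\frac1G\sum_{\bar G\in\tau_{s_0}}v_b(\bar G)=\frac{v_b(G)}{G}\ge\frac1G$; the Nash property forces $u_b(\tau,f)\ge\frac1G$ for every buyer, whence $\sum_b u_b(\tau,f)\ge\frac BG\ge\frac SG$ (this is where $B\ge S$ enters), and Lemma~\ref{lem:SW_is_sum_utilties} together with $u_s(\tau,f)\ge0$ gives $SW(\tau,f)=\frac1S\big(\sum_s u_s(\tau,f)+\sum_b u_b(\tau,f)\big)\ge\frac1G$. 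Combining the two estimates yields $\frac{SW(\pi,f)}{\opt}\ge\frac1G$, and Example~\ref{examp:low_in_SPE} (which satisfies both hypotheses) shows the bound is tight.

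I do not expect a genuinely hard step. The only points that need care are the passage from a possibly mixed seller profile (which a semi-pure SPE allows) to a single pure seller tuple, and the observation that in the second case the deviating buyer really is the unique bidder at the previously empty seller, so that the relevant second prices vanish. The organizing idea is simply the dichotomy ``every seller occupied'' versus ``some seller empty''.
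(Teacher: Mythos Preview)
Your proof is correct and follows the paper's argument essentially verbatim: the same dichotomy on whether every seller is occupied, the same use of the Nash condition against an empty seller to get $u_b\ge\frac1G$ for all $b$, and the same appeal to Lemma~\ref{lem:SW_is_sum_utilties} together with $B\ge S$. You are slightly more explicit than the paper about the mixed-to-pure reduction and about why the second price vanishes at an empty seller, but the structure and key steps are identical.
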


\begin{proof}
Let $(\pi,f)$ be an arbitrary semi-pure SPE. By the definition of a SPE it follows that for any pure strategy of the sellers, $\tau$, the strategy tuple,
$f(\tau)$, is a Nash equilibrium in the induced game $SC_\tau$.

If  ${\cal B}_{s}(\tau,f) \not = \emptyset$ then
$\sum_{{\overline G}\in \pi_s}\bar V_1({\calB}_s(\pi,f),{\overline G})\ge 1$.

Therefore, if ${\cal B}_{s}(\tau,f) \not = \emptyset$ $\ \forall s \in S$ then
$SW(\pi,f) = \frac{1}{SG}\sum_{s\in S}\sum_{{\overline G}\in \pi_s}\bar V_1({\calB}_s(\pi,f),{\overline G})
\ge \frac{1}{G}$.

If, on the other hand, there exists some seller $\hat s$ such that ${\cal B}_{\hat s}(\tau,f) = \emptyset$ (no buyer chooses seller $\hat s$). The utility of every player $b$ is no less than the utility after deviating to seller $\hat s$, which itself must be at least $\inv{G}$. Therefore $u_b(\tau,f)\ge \inv{G}$. Lemma \ref{lem:SW_is_sum_utilties} implies that $SW(\tau,f) \ge  \inv{S} \cdot B\cdot \inv{G}$. Since
$B \geq S$, this implies that $SW(\tau,f)\ge \frac{1}{G}$. As this holds for any pure strategy $\tau$ it must also hold for any mixture. In particular, $SW(\pi,f)\ge \frac{1}{G}$.

On the other hand, the maximal social welfare possible is no more than $1$, and therefore ratio is at least $\frac{1}{G}$.
\end{proof}

In terms of an upper bound, it is easy to construct examples where in equilibrium the optimal social welfare is obtained (e.g., when $v_b(G)=G$ for all players).

\section{Monopoly versus Competition}   \label{Sec:monop_compet}

We now turn to the central question of this work. Given a valuation matrix, will competition among sellers increase efficiency?  On the one hand competition induces market thinning for each seller. This may result in low demand and inefficient allocation. On the other hand, intuition suggests that competition will induce sellers to reveal more information, thus improving allocation.

Recall that $p^i=p^i(B,G)$ indicates the number of goods which demand is greater or equal $i$. In Theorem ~\ref{The:SW_third_opt} we have shown that for $S=1$, $~SW(\pi,f) \ge \frac{OPT}{3} = \frac{p^1(B,G)}{3G}$. We now improve upon this bound:

\begin{lemma}   \label{lemma:SW_third_opt_and_sw_refined}
Let $S=1$ and assume $\pi=\{G\}$ is an optimal strategy for the seller. Then
$SW(\pi,f) \ge \frac{p^1+ p^2}{3G}$. If, in addition, $\pi=\{G\}$ yields the maximal 
social welfare among all revenue maximizing strategies then
$SW(\pi,f) \ge \frac{p^1+ p^2}{2G}$.
\end{lemma}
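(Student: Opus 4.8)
The plan is to pass to unnormalized quantities. Since $S=1$ and $\pi=\{G\}$, we have $SW(\pi,f)=\bar V_1(G)/G$ and the seller's revenue is $\bar V_2(G)/G$, so the two claims read $\bar V_1(G)\ge(p^1+p^2)/3$ and $\bar V_1(G)\ge(p^1+p^2)/2$. Split the demanded goods by demand: let $N_2:=p^2$ be the number wanted by at least two buyers and $N_1:=p^1-p^2$ the number wanted by exactly one buyer, so $p^1+p^2=N_1+2N_2$. From revenue-optimality of $\{G\}$ I would extract three inequalities. First, the finest partition has revenue $N_2$ (each demand-$\ge 2$ good contributes $1$), so $\bar V_2(G)\ge N_2$, hence $\bar V_1(G)\ge N_2$. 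Second, let $\pi_M$ be the partition that keeps every demand-$\ge 2$ good as a singleton, pairs the demand-$1$ goods into a maximum number $M$ of pairs whose two goods are wanted by two \emph{different} buyers, and leaves the remaining demand-$1$ goods as singletons; each such singleton demand-$\ge 2$ good and each such pair has second-highest value $1$, so $\bar V_2(G)\ge N_2+M$, hence $\bar V_1(G)\ge N_2+M$; by the standard combinatorial bound $M=\min(\lfloor N_1/2\rfloor,\,N_1-n^\ast)$, where $n^\ast$ is the largest number of demand-$1$ goods sharing a common buyer. Third, that common buyer values at least $n^\ast$ goods, so $\bar V_1(G)\ge n^\ast$.

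For the $1/3$ bound I would then split on $n^\ast$ versus $N_1/2$. If $n^\ast\le N_1/2$ then $M=\lfloor N_1/2\rfloor$, and $3\bar V_1(G)\ge 3N_2+3\lfloor N_1/2\rfloor\ge N_1+2N_2$ for $N_1\ge 2$ (the cases $N_1\le 1$ being immediate, since any demanded good forces $\bar V_1(G)\ge 1$). If $n^\ast> N_1/2$ then $M=N_1-n^\ast$; adding $\bar V_1(G)\ge N_2+N_1-n^\ast$ to $\bar V_1(G)\ge n^\ast$ gives $2\bar V_1(G)\ge N_1+N_2$, which suffices when $N_1\ge N_2$, and when $N_1<N_2$ the inequality $\bar V_1(G)\ge N_2$ already finishes it.

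For the $1/2$ bound I would first run the reduction from the proof of Theorem~\ref{The:SW_third_opt}: with $a,b$ the two buyers of largest total value and $G'$ the goods wanted by $a$ or $b$, one has $\bar V_2(G')=\bar V_2(G)$, so revenue-optimality forces $\bar V_2(G\setminus G')=0$ and hence $\{G',G\setminus G'\}$ is also revenue-maximizing; the extra hypothesis that $\{G\}$ has largest welfare among revenue-maximizers then forces $\bar V_1(G\setminus G')=0$, so every demanded good is wanted by $a$ or $b$. Writing $m$ for the number of goods wanted by both, $q_1$ and $r_1$ for the demand-$1$ goods of $a$ only and $b$ only, and $\alpha,\beta$ for the demand-$\ge 2$ goods wanted by $a$ only and $b$ only, we get $\bar V_1(G)=m+q_1+\alpha\ge\bar V_2(G)=m+r_1+\beta$ and $p^1+p^2=\bar V_1(G)+\bar V_2(G)+\alpha+\beta$, so the target becomes $q_1\ge r_1+2\beta$. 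Here $\pi_M$ has $M=\min(q_1,r_1)$. Comparing $\{G\}$ against the finest partition gives $\alpha\le r_1$ (and $\beta\le q_1$), and comparing it against $\pi_M$ forces $\beta=0$ if $q_1\le r_1$ and $\alpha=0$ if $q_1\ge r_1$. In each resulting situation $\pi_M$ turns out to be revenue-maximizing, so the maximal-welfare hypothesis gives $p^1-M=SW(\pi_M)\cdot G\le\bar V_1(G)$; unwinding this either yields $\beta=0$ (when $q_1>r_1$) or contradicts $q_1<r_1$, and combined with $\bar V_1(G)\ge\bar V_2(G)$ (that is, $q_1+\alpha\ge r_1+\beta$) this gives $q_1\ge r_1+2\beta$ in all cases.

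I expect the main obstacle to be organizational rather than conceptual: pinning down the size of the maximum bichromatic matching, keeping straight which revenue- or welfare-inequality each auxiliary partition supplies, and arranging the small but fiddly case analyses so that every configuration — including the degenerate ones — is covered. A subtler point that must be handled with care is verifying, under the Part-2 hypotheses, that $\pi_M$ is genuinely revenue-maximizing before the maximal-welfare property is applied to it.
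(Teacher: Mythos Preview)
Your argument is correct and follows a genuinely different route from the paper's.

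The paper proceeds by an iterative ``peel-off'' of the demand-$\ge 2$ goods $\hat G=\{g:v^g(B)\ge 2\}$: comparing $\{G\}$ against the partition that separates $\hat G$ into singletons forces $\bar V_2(G\setminus\hat G)=\bar V_2(G)-p^2$ (and, under the Part-2 hypothesis, also $\bar V_1(G\setminus\hat G)=\bar V_1(G)-p^2$). It then takes the top two buyers $a,b$, splits $G$ into $G'=\{g:v_a^g+v_b^g\ge 1\}$ and $G''$, shows $\hat G\subset G'$ and $\bar V_2(G'')=0$, and runs a case analysis on whether $a$ or $b$ remains dominant in $G\setminus\hat G$ to conclude $p^1+p^2\le 3\bar V_1(G)$ (resp.\ $\le 2\bar V_1(G)$). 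Your approach instead compares $\{G\}$ directly against two concrete partitions --- the finest one and the matching partition $\pi_M$ --- and extracts the three inequalities $\bar V_1(G)\ge N_2$, $\bar V_1(G)\ge N_2+M$, $\bar V_1(G)\ge n^\ast$; the standard formula $M=\min(\lfloor N_1/2\rfloor, N_1-n^\ast)$ then lets a short case split finish Part~1. For Part~2 you reuse the $G',G''$ reduction from Theorem~\ref{The:SW_third_opt} to force every demanded good to be wanted by $a$ or $b$, show $\pi_M$ is itself revenue-maximal in each subcase, and invoke welfare-maximality of $\{G\}$ against $\pi_M$. What each buys: the paper's argument yields structural information (the exact drop of $\bar V_2$ on removing $\hat G$) that feeds the later lemmas; your argument is more elementary and makes explicit which competing partitions drive each inequality.

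Two cosmetic points. The parenthetical ``$\beta\le q_1$'' is not actually a consequence of the finest-partition comparison (that comparison gives only $r_1\ge\alpha$, via $\bar V_2(G)\ge N_2$); fortunately you never use it. And in Part~2 you implicitly assume $|B|\ge 2$ when picking $a\ne b$; the case $|B|\le 1$ (hence $p^2=0$) is trivial but should be mentioned.
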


\begin{proof}
In the single seller game the buyers have a degenerate strategy and so henceforth we omit the reference to $f$ in the notation.

As $\pi=\{G\}$ the social welfare is $SW(\pi,f) = \frac{\bar V_1(G)}{G}$ and the seller's profit is $\frac{\bar V_2(G)}{G}$.

In case $B=1$ or $p^2=0$ or $G=1$ both parts of the lemma follow trivially 
(note that $B=1 \implies p^2=0$). In case $B=2$ (2 buyers), 
$p^1= {\bar V}_1(G)+{\bar V}_2 - p^2 \le 2{\bar V}_1(G)- p^2$.
Therefore,
$ \frac{p^1 + p^2}{2G}\le \frac{2{\bar V}_1(G)}{2G} =SW(\pi)$, as required

Henceforth, we assume $B\ge 3$, $p^2\ge 1$ and $G \geq 2$.

Define $\hat G \defas \set{g\in G | {v^g(B)\geq 2}}$. There are $p^2$ elements in $\hat G$. For an arbitrary $g \in \hat G$ consider the partition ${\bar \pi} = \{G\setminus\{g\},\{g\}\}$.
Clearly ${\bar V}_2(G)-1 \le {\bar V}_2(G\setminus\{g\}) \le {\bar V}_2(G)$. If ${\bar V}_2(G\setminus\{g\}) = {\bar V}_2(G)$
then the seller's profit at $\bar \pi$ is:
\begin{equation*}
u_s({\bar \pi}) = \frac{1}{G}\cdot 1 + \frac{G-1}{G}\frac{{\bar V}_2(G\setminus\{g\})}{G-1} = \frac{1}{G}\cdot 1 + \frac{G-1}{G}\frac{{\bar V}_2(G)}{G-1} > \frac{{\bar V}_2(G)}{G}
\end{equation*}
contradicting the optimality of $\pi$. Therefore $g\in \hat G  \implies {\bar V}_2(G\setminus\{g\}) = {\bar V}_2(G)-1$. Applying this iteratively for all $p^2$ goods in $\hat G$ implies:
\begin{equation}
{\bar V}_2(G\setminus {\hat G}) = {\bar V}_2(G)-p^2
\label{eq:removal_of_good}
\end{equation}

If, in addition, $\pi$ is the efficient revenue-maximizing strategy then similar arguments
yield 
\begin{equation}
{\bar V}_1(G\setminus {\hat G}) = {\bar V}_1(G)-p^2
\label{eq:removal_for_V_1}
\end{equation}

Let $a\not =b$ be two distinct players such that
$v_a(G) = {\bar V}_1(G)$ and $v_b(G) = {\bar V}_2(G)$.
Let $G' \defas \set{g:v_a(g)+v_b(g)\ge 1} \subseteq G$ be the set of goods demanded by players $a$ or $b$ and let $G'' \defas G \setminus G'$. Note that the revenue from the partition $\hat \pi =\{G',G''\}$ is no less then that of $\pi$, then by the maximality of $\pi$ we must conclude that no revenue is generated from $G''$, or more formally,
${\bar V}_2(G'')=0$. In particular, if player $c \not =a,b$ satisfies $v_c(G'')={\bar V}_1(G'')$, then for any player, $d\not =c$, $v_d(G'')=0$, which in return implies that $\set{g\in G'' | {v^g(B)\geq 2}} = \emptyset$ and so:
\begin{equation}
\label{empty_intersection}
\hat G \subset G'.
\end{equation}

Clearly 
$v_{a}(G\setminus{\hat G})\ge v_{a}(G)-p^2 \ge \bar V_2(G)-p^2 = \bar V_2(G\setminus{\hat G})$ 
and similarly 
$v_{b}(G\setminus{\hat G})\ge \bar V_{2}(G\setminus{\hat G})$.

If 
$\max\set{v_a(G\setminus{\hat G}), v_b(G\setminus{\hat G})} < \bar V_1(G\setminus{\hat G})$ 
then 
$v_{a}(G\setminus{\hat G}) = v_{b}(G\setminus{\hat G}) = \bar V_2(G\setminus{\hat G})=\bar V_2(G)-p^2$. This implies that  $g \in \hat G \iff v_a(g)=v_b(g)=1$
and so
\begin{equation}
p^1(G') = v_a(G')+v_b(G')-p^2.
\end{equation}
This implies
$$
p^1+p^2 = p^1(G')+p^1(G'')+p^2 = [v_a(G')+v_b(G')-p^2] +p^1(G'')+p^2
\le 3 \bar V_1(G).
$$

If, on the other hand, 
$\max\set{v_a(G\setminus{\hat G}), v_b(G\setminus{\hat G})} = \bar V_1(G\setminus{\hat G})$ then $v_c(G\setminus{\hat G})\le \bar V_2(G\setminus{\hat G})$. To see this assume the opposite inequality holds, namely $v_c(G\setminus{\hat G})> \bar V_2(G\setminus{\hat G})$, in which case $\bar V_1(G\setminus{\hat G})> \bar V_2(G\setminus{\hat G})$ and so there are 2 players that value $G\setminus{\hat G}$ strictly more than $\bar V_2(G\setminus{\hat G})$ which is impossible. Hence, $v_c(G\setminus{\hat G})\le \bar V_2(G\setminus{\hat G}) = \bar V_2(G) -p^2$.

As $G'' \subset G\setminus \hat G $ we know that
$p^1(G'') = v_c(G'')\le v_c(G\setminus \hat G) \leq \bar V_2(G)-p^2\le \bar V_1(G)-p^2$.

Therefore,
$$
p^1+p^2 = p^1(G')+p^1(G'')+p^2 \le  [v_a(G')+v_b(G')] + [\bar V_1(G)-p^2]+p^2 =
\le 3 \bar V_1(G).
$$

In both cases $p^1+p^2\le 3 \bar V_1(G)$. Dividing both sides by $3G$ yields
$\frac{p^1+ p^2}{3G} \le SW(\pi,f)$, as argued in the first part of the lemma.

Henceforth assume $\pi$ is the efficient revenue-maximizing strategy. 
In this case no social welfare is generated from $G''$ and so ${\bar V}_1(G'')=0$.
In particular, $v_c(G'')=0$ for any player, $c$.

Applying equation \ref{eq:removal_for_V_1}:
$$v_{a}(G\setminus{\hat G})\ge v_{a}(G)-p^2 = \bar V_1(G)-p^2 = \bar V_1(G\setminus{\hat G})
$$
Implying 
\begin{equation}
\label{eq for V1}
v_{a}(G\setminus{\hat G}) = \bar V_1(G)-p^2 = V_1(G\setminus{\hat G}).
\end{equation}

Similarly $v_{b}(G\setminus{\hat G})\ge \bar V_{2}(G\setminus{\hat G})$.
Since $v_{a}(G\setminus{\hat G}) = V_1(G\setminus{\hat G})$, 
we conclude that 
\begin{equation}
\label{eq for V2}
v_{b}(G\setminus{\hat G})= \bar V_{2}(G\setminus{\hat G})=\bar V_2(G)-p^2.
\end{equation}

From equations \ref{eq for V1},\ref{eq for V2} we conclude $g \in \hat G \iff v_a(g)=v_b(g)=1$.  Therefore
\begin{equation}
p^1=p^1(G') = v_a(G')+v_b(G')-p^2(G') = v_a(G')+v_b(G')-p^2.
\end{equation}
Adding $p^2$ to both sides and dividing by $2G$ yields
$\frac{p^1+ p^2}{2G} \le SW(\pi,f)$, as desired.

\end{proof}

\begin{lemma}   \label{lemma:SW_third_opt_and_sw_refined_part_2}
Let $S=1$ and assume $\pi$ is an optimal strategy for the seller. Then
$SW(\pi,f) \ge \frac{p^1+ p^2}{3G}$. If, in addition, 
$\pi$ yields the optimal social welfare among all such strategies, then
$SW(\pi,f) \ge \frac{p^1+ p^2}{2G}$.
\end{lemma}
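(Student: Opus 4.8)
The plan is to reduce the statement to Lemma~\ref{lemma:SW_third_opt_and_sw_refined} by decomposing an arbitrary optimal partition $\pi$ into its blocks and treating each block as an independent single–seller sub-instance. The three quantities in play all decompose additively over the blocks of $\pi$: since every variant lies in exactly one block, $p^1=\sum_{\bar G\in\pi}p^1(\bar G)$ and $p^2=\sum_{\bar G\in\pi}p^2(\bar G)$ (recall $p^i(\bar G)$ abbreviates $p^i(B,\bar G)$), while \eqref{eq:social_welfare} gives $SW(\pi,f)=\tfrac1G\sum_{\bar G\in\pi}\bar V_1(\bar G)$ and \eqref{eq:utility of seller} gives $u_s(\pi,f)=\tfrac1G\sum_{\bar G\in\pi}\bar V_2(\bar G)$.

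First I would observe that optimality of $\pi$ for the whole variant set $G$ forces, for every block $\bar G\in\pi$, the trivial partition $\{\bar G\}$ to be revenue-maximizing for the sub-instance $SC(\{s\},B,\bar G,V|_{\bar G})$: if some partition $\rho$ of $\bar G$ generated strictly more revenue than $\{\bar G\}$, then replacing the block $\bar G$ in $\pi$ by $\rho$ would raise $u_s$, because revenue is additive over blocks and the other blocks are untouched — contradicting the optimality of $\pi$ (the $\tfrac1G$ versus $\tfrac1{|\bar G|}$ normalisation is irrelevant, as only the sign of the difference matters). Applying the first part of Lemma~\ref{lemma:SW_third_opt_and_sw_refined} to each such sub-instance yields $\bar V_1(\bar G)\ge\tfrac13\bigl(p^1(\bar G)+p^2(\bar G)\bigr)$. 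Summing over $\bar G\in\pi$ and dividing by $G$ gives $SW(\pi,f)\ge\tfrac{p^1+p^2}{3G}$.

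For the refined bound, suppose in addition that $\pi$ maximises social welfare among all revenue-maximizing strategies. I would argue, by the same block-replacement trick, that for each block $\bar G$ the trivial partition $\{\bar G\}$ maximises social welfare among all revenue-maximizing partitions of $\bar G$: were there a partition $\rho$ of $\bar G$ with the same revenue as $\{\bar G\}$ but strictly larger $\bar V_1$-sum, replacing $\bar G$ by $\rho$ in $\pi$ would preserve $u_s$ — hence keep the partition revenue-optimal for $G$, using additivity of revenue and the fact, established in the previous paragraph, that $\{\bar G\}$ is already revenue-optimal for $\bar G$ — while strictly increasing $SW$, contradicting the efficiency of $\pi$. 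The second part of Lemma~\ref{lemma:SW_third_opt_and_sw_refined}, applied blockwise, then gives $\bar V_1(\bar G)\ge\tfrac12\bigl(p^1(\bar G)+p^2(\bar G)\bigr)$; summing and dividing by $G$ yields $SW(\pi,f)\ge\tfrac{p^1+p^2}{2G}$.

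The only real work is the block-decomposition bookkeeping: verifying that revenue and social welfare add up over blocks, and that the phrases ``optimal'' and ``efficient revenue-maximizing'' restrict correctly to each sub-instance. Lemma~\ref{lemma:SW_third_opt_and_sw_refined} does all the quantitative heavy lifting and already handles the degenerate sub-cases ($|\bar G|=1$, $p^2(\bar G)=0$, one or two buyers), so no additional case analysis is needed. The point to be careful about is that when passing to a block we retain the \emph{full} buyer set $B$, so that $p^i(\bar G)$ and $\bar V_j(\bar G)$ in the sub-instance agree with their meaning in the original instance; this is exactly what makes the additive identities $p^1=\sum_{\bar G}p^1(\bar G)$ and $p^2=\sum_{\bar G}p^2(\bar G)$ hold.
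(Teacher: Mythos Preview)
Your approach is essentially identical to the paper's: both reduce to Lemma~\ref{lemma:SW_third_opt_and_sw_refined} by passing to the sub-instance $SC(1,B,\bar G,V|_{\bar G})$ for each block $\bar G\in\pi$, invoking that lemma blockwise, and then averaging. You are in fact more explicit than the paper in justifying why optimality (and, for part two, efficient optimality) of $\pi$ on $G$ forces $\{\bar G\}$ to be optimal (resp.\ efficient-optimal) on each block---the paper simply asserts this---and you also make the additivity bookkeeping explicit; the only thing the paper mentions that you omit is the one-line reduction to pure $\pi$.
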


Note that $\frac{p^1}{G}$ is actually the maximal social welfare, 
and so the first part of Lemma \ref{lemma:SW_third_opt_and_sw_refined_part_2} implies 
Theorem \ref{The:SW_third_opt} while the second part implies Theorem  \ref{The:SW_third_opt_and_sw}.

\begin{proof}
Note that it is enough to prove the theorem for the pure strategies, 
so assume $\pi$ be an optimal pure strategy. For any $\bar G \in \pi$ consider the game 
$SC(S,B, \bar G, (v^g_b)_{b\in B}^{g \in \bar G})$. 
Namely the game where the set of goods is a-priori $\bar G$. In this game the strategy  
$\{\bar G\}$ is optimal for the seller. By Lemma ~\ref{lemma:SW_third_opt_and_sw_refined},
part~$1$:

\begin{equation}
\frac{\bar V_1(\bar G)}{\abs{\bar G}} \geq \frac{p^1(\bar G)+ p^2(\bar G)}{3\abs{\bar G}}.
\end{equation}

Averaging over $\bar G \in \pi$:
\begin{eqnarray*}
SW(\pi) = \sum_{\bar G\in \pi_s}{\frac{\abs{\bar G}}{G} \cdot \frac{\bar V_1(\bar G)}{\abs{\bar G}}}
\ge
 \inv{G}\sum_{\bar G\in \pi_s}\frac{p^1(\bar G)+ p^2(\bar G)}{3}
 =
 \frac{p^1(G)+ p^2(G)}{3G}
\end{eqnarray*}

In case, $\pi$ is the efficient revenue maximizing strategy, Lemma ~\ref{lemma:SW_third_opt_and_sw_refined},
part~$2$, yields:
\begin{equation}
\frac{\bar V_1(\bar G)}{\abs{\bar G}} \geq \frac{p^1(\bar G)+ p^2(\bar G)}{2\abs{\bar G}}.
\end{equation} 

Averaging over $\bar G \in \pi$:
\begin{eqnarray*}
SW(\pi) = \sum_{\bar G\in \pi_s}{\frac{\abs{\bar G}}{G} \cdot \frac{\bar V_1(\bar G)}{\abs{\bar G}}}
\ge
 \inv{G}\sum_{\bar G\in \pi_s}\frac{p^1(\bar G)+ p^2(\bar G)}{2}
 =
 \frac{p^1(G)+ p^2(G)}{2G}
\end{eqnarray*}

\end{proof}

Fix a set of buyers, $B$, a set of goods, $G$ and a valuation matrix $v$. 
Let $\Gamma^1$ be the set of optimal strategies for the for the seller in 
$SC(1,B,G,v)$ and let $\Gamma^S$ be the set of all \SPE ~ in $SC(S,B,G,v)$. 
In addition, let $\Gamma'^1$ be the set of optimal strategies for the for the seller in 
$SC(1,B,G,v)$, which yield the maximum social welfare possible there.

\subsection{Upper bound}

We begin by identifying an upper bound on the social welfare attainable in a competition.
Let $c_1=p^1-p^2$ be the number of goods with a unit demand.

\begin{lemma}
\label{lem:max_SW_in_comp}
Let $(\pi,f)$ be an arbitrary strategy tuple in the game $SC(S,B,G,V)$. Then
$SW(\pi,f) \le \frac{c_1+min\set{S,B}\cdot p^2}{SG}$.
\end{lemma}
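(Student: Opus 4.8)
The plan is to bound $SG\cdot SW(\pi,f)$ good by good. Starting from~(\ref{eq:social_welfare}) we have $SG\cdot SW(\pi,f)=\sum_{s\in S}\sum_{\bar G\in\pi_s}\bar V_1({\calB}_s(\pi,f),\bar G)$. For a fixed seller $s$ and block $\bar G\in\pi_s$, let $\hat b$ be a buyer in ${\calB}_s(\pi,f)$ attaining the maximum, so $\bar V_1({\calB}_s(\pi,f),\bar G)=v_{\hat b}(\bar G)=\sum_{g\in\bar G}v_{\hat b}^g$. Since the blocks of $\pi_s$ partition $G$, summing over $\bar G\in\pi_s$ and rearranging the order of summation expresses $SG\cdot SW(\pi,f)$ as $\sum_{g\in G}\sum_{s\in S}v_{\hat b(s,g)}^g$, where $\hat b(s,g)$ denotes the maximizing buyer in ${\calB}_s(\pi,f)$ for the (unique) block of $\pi_s$ containing $g$.

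First I would observe that for each good $g$ and seller $s$ the term $v_{\hat b(s,g)}^g$ equals $1$ only if some buyer in ${\calB}_s(\pi,f)$ values $g$. Hence $\sum_{s\in S}v_{\hat b(s,g)}^g\le n_g$, where $n_g\defas\abs{\set{s\in S\mid \exists b\in{\calB}_s(\pi,f),\ v_b^g=1}}$ is the number of sellers that have at least one buyer interested in $g$. The key point is that the sets ${\calB}_s(\pi,f)$ are pairwise disjoint (each buyer chooses exactly one seller, since $f_b(\pi)\in S$), so there are $n_g$ \emph{distinct} buyers valuing $g$; this gives $n_g\le v^g(B)$, and trivially $n_g\le S$, so $n_g\le\min\set{S,v^g(B)}\le\min\set{S,B}$.

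Then I would split the sum over goods according to demand $v^g(B)$. Goods with $v^g(B)=0$ contribute nothing. Goods with $v^g(B)=1$ satisfy $n_g\le 1$, and there are exactly $c_1=p^1-p^2$ of them. Goods with $v^g(B)\ge 2$ satisfy $n_g\le\min\set{S,B}$, and there are $p^2$ of them. Summing the per-good bounds yields $SG\cdot SW(\pi,f)\le c_1+\min\set{S,B}\cdot p^2$, and dividing by $SG$ gives the claim.

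There is no substantial obstacle here beyond the bookkeeping; the only mildly delicate points are the rearrangement that turns the seller-block double sum into a single sum over goods, and the disjointness argument that pins down $n_g\le v^g(B)$ — both are routine. (Note the argument in fact yields the slightly stronger bound $\frac{1}{SG}\sum_{g}\min\set{S,v^g(B)}$, which we weaken to the stated form.)
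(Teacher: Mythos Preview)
Your proof is correct and follows essentially the same good-by-good counting as the paper: both bound the contribution of each good $g$ by the number of sellers whose buyer set contains someone valuing $g$, then split according to whether $v^g(B)=0,1,$ or $\ge 2$. The only cosmetic difference is that the paper first invokes monotonicity of $SW$ under refinement to reduce WLOG to the finest partition before summing, whereas you skip that reduction by picking a block-maximizer $\hat b(s,g)$ and rearranging the double sum directly; your route is slightly more self-contained, and as you note it actually gives the sharper per-good bound $\min\{S,v^g(B)\}$.
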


\begin{proof}
Clearly the maximal welfare can be obtained when sellers fully reveal their information. That is we can assume, W.L.O.G that $\pi=\{\{g\}: g \in G\}$. Let $\hat G = \{g:\sum_b v^b_g=1\}$.  By equation
\ref{eq:social_welfare}:
$$
SW(\pi,f) = \inv{SG}\sum_{s\in S}
\sum_{g \in G}
\bar V_1({\calB}_s(\pi,f),\{g\}) =
$$
$$
=
\inv{SG}\left (
\sum_{g \in \hat G}\sum_{s\in S}
\bar V_1({\calB}_s(\pi,f),\{g\})+ \sum_{g: \sum_b v_b^g \ge 2}\sum_{s\in S}
\bar V_1({\calB}_s(\pi,f),\{g\})\right ) \le
$$
$$\le
\inv{SG} (\sum_{g \in \hat G} 1 + p^2 min\set{S,B}) = \frac{c_1+min\set{S,B}\cdot p^2}{SG}.
$$
\end{proof}

\begin{theorem} \label{The:spe_prof_max_up_bound}
For any $B,G,v$ satisfying $B\ge 2$, for any $S\ge 2$, any
$\hat \pi\in \Gamma^1$ and any $(f,\pi) \in \Gamma^S$:
$$\frac{SW(\pi,f)}{SW(\hat \pi)} \leq 1+\inv{ S}.$$
\end{theorem}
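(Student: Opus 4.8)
The plan is to sandwich the ratio between an upper bound on the competitive welfare $SW(\pi,f)$ and a lower bound on the monopolistic welfare $SW(\hat\pi)$, both written purely in terms of $p^1$ and $p^2$, and then to divide. For the upper bound I will apply Lemma~\ref{lem:max_SW_in_comp} to the tuple $(\pi,f)$ (it holds for every strategy tuple, so a fortiori for an \SPE): with $c_1=p^1-p^2$ this reads $SW(\pi,f)\le\frac{c_1+\min\set{S,B}\,p^2}{SG}=\frac{p^1+(\min\set{S,B}-1)\,p^2}{SG}$, and since $\min\set{S,B}\le S$,
\[
SW(\pi,f)\ \le\ \frac{p^1+(S-1)\,p^2}{SG}.
\]

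For the lower bound I will combine two estimates on the revenue--maximizing monopolist. The first is Lemma~\ref{lemma:SW_third_opt_and_sw_refined_part_2}, namely $SW(\hat\pi)\ge\frac{p^1+p^2}{3G}$. The second, which is the one ingredient not already isolated in the text, is $SW(\hat\pi)\ge\frac{p^2}{G}$: by Lemma~\ref{lem:SW_is_sum_utilties} with $S=1$ we have $SW(\hat\pi)=u_s(\hat\pi)+\sum_b u_b(\hat\pi)\ge u_s(\hat\pi)$, and because $\hat\pi$ maximizes the seller's profit it does at least as well as the finest partition, whose profit by equation~\ref{eq:utility of seller} is $\inv{G}\sum_{g\in G}\bar V_2(B,\set{g})=\frac{p^2}{G}$ (with $0/1$ valuations, $\bar V_2(B,\set{g})=1$ exactly when $g$ has demand at least two). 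Both bounds survive a mixed optimal $\hat\pi$ by averaging over the pure optima in its support. I then take the convex combination with weight $\lambda=\tfrac{3}{S+1}$, which lies in $(0,1]$ precisely because $S\ge2$, obtaining
\[
SW(\hat\pi)\ \ge\ \tfrac{3}{S+1}\cdot\tfrac{p^1+p^2}{3G}+\bigl(1-\tfrac{3}{S+1}\bigr)\cdot\tfrac{p^2}{G}\ =\ \frac{p^1+(S-1)\,p^2}{(S+1)G}.
\]

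Dividing the two displays, the common factor $p^1+(S-1)p^2$ cancels (in the degenerate case $p^1=0$ we have $SW\equiv0$ and the statement is vacuous), and I get $\frac{SW(\pi,f)}{SW(\hat\pi)}\le\frac{S+1}{S}=1+\inv{S}$, which is the claim.

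The step that requires a little insight is the auxiliary bound $SW(\hat\pi)\ge\frac{p^2}{G}$ (via profit $\ge$ finest--partition profit), together with the choice $\lambda=\tfrac{3}{S+1}$ that makes the numerator and denominator expressions literally proportional with ratio $\tfrac{S}{S+1}$; the fact that this weight is $\le 1$ exactly when $S\ge2$ is precisely why the hypothesis $S\ge2$ appears in the statement. By contrast, using only $SW(\hat\pi)\ge\frac{p^1+p^2}{3G}$ suffices when $p^1\ge 2p^2$ but fails once $p^1<2p^2$ and $S\ge3$, so the second estimate is genuinely needed. Everything else is a direct appeal to Lemmas~\ref{lem:max_SW_in_comp}, \ref{lemma:SW_third_opt_and_sw_refined_part_2}, \ref{lem:SW_is_sum_utilties} and the inequality $\min\set{S,B}\le S$.
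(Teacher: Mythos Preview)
Your proof is correct and uses the very same three ingredients as the paper: the upper bound from Lemma~\ref{lem:max_SW_in_comp}, the lower bound $SW(\hat\pi)\ge\frac{p^1+p^2}{3G}$ from Lemma~\ref{lemma:SW_third_opt_and_sw_refined_part_2}, and the auxiliary bound $SW(\hat\pi)\ge\frac{p^2}{G}$ (which the paper invokes as ``an alternative lower bound'' without spelling out the finest-partition argument you give). The difference is purely in packaging. The paper splits into three cases according to whether $\rho=p^2/c_1$ is at most~$1$, exceeds~$1$, or $c_1=0$, applying one lower bound in the first case and the other in the remaining two, and in Case~1 further analyzes the function $\rho\mapsto\frac{1+\min\{S,B\}\rho}{1+2\rho}$. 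Your convex combination with weight $\lambda=\tfrac{3}{S+1}$ collapses all of this into a single line: it makes the monopoly lower bound literally proportional to the (weakened) competition upper bound, and the observation that $\lambda\le1\iff S\ge2$ neatly explains the hypothesis. Your version is shorter and more transparent; the paper's case split has the minor advantage of not discarding the factor $\min\{S,B\}$ in the numerator until the end, but this buys nothing for the final bound.
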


In words, a competition can result in negligible higher efficiency.

\begin{proof}
We define the auxiliary variable $\rho \defas \frac{p^2}{c_1}$. We separate the proofs to prove to the cases $\rho \leq 1$ and $\rho > 1$.

{\bf Case 1 - $\rho \leq 1$ and  $c_1>0$:} By Lemma~\ref{lemma:SW_third_opt_and_sw_refined_part_2}, $SW(\hat \pi) \ge \frac{c_1 + 2p^2}{3G}$ and by Lemma \ref{lem:max_SW_in_comp}
$SW(\pi,f) \le \frac{c_1+min\set{S,B}\cdot p^2}{SG}$.
Therefore the ratio satisfies:
\begin{eqnarray*}
\frac{SW(\pi,f)}{SW(\hat \pi)}  \le \frac{\frac{\min\set{S,B}\cdot p^2 + c_1}{SG}}
{\frac{c_1 + 2p^2}{3G}}
= \frac{3}{S}\frac{c_1 + \min\set{S,B}p^2}{c_1 + 2p^2}
= \frac{3}{S}\frac{1 + \min\set{S,B}\rho}{1 + 2\rho}
\end{eqnarray*}

If $\min\set{S,B} = 2$, then  $\frac{SW(\pi,f)}{SW(\hat \pi)}  \le \frac{3}{S}\frac{1 + 2\rho}{1 + 2\rho} = \frac{3}{S} \leq
\frac{1 + S}{S} = 1+\inv{S}$, as claimed. If $\min\set{S,B} > 2$, the function $g(\rho) = \frac{1 + \min\set{S,B}\rho}{1 + 2\rho}$ is
strictly increasing, and obtains its maximum at $\rho=1$. Therefore,
$\frac{SW(\pi,f)}{SW(\hat \pi)}  \le \frac{3}{S}\frac{1 + \min\set{S,B}}{1 + 2} \le 1+\inv{S}$, as claimed.

{\bf Case 2 - $\rho > 1$ and $c_1>0$:}
We now use an alternative lower bound on a profit maximizing single seller's social welfare, which is $p^2/G$. 
Therefore:

\begin{eqnarray*}
\frac{SW(\pi,f)}{SW(\hat \pi)}  \le \frac{\frac{\min\set{S,B}\cdot p^2 + c_1}{SG}}
{\frac{p^2}{G}} = \inv{S}({\min\set{S,B} + \frac{c_1}{p^2}})
=  \inv{S}( {\min\set{S,B} + \inv{\rho}}) \le 1+\inv{S}.
\end{eqnarray*}

{\bf Case 3 - $c_1=0$:} Using the same lower bound on social welfare as in Case 2 yields:
\begin{eqnarray*}
\frac{SW(\pi,f)}{SW(\hat \pi)}  \le \frac{\frac{\min\set{S,B}\cdot p^2 + c_1}{SG}}
{\frac{p^2}{G}} = \frac{\min\set{S,B}}{S} < 1+\inv{S}.
\end{eqnarray*}

\end{proof}

Note that the last theorem, although stated for SPE actually holds for any strategy profile in the competition setting ($S\ge 2$). This bound turns out to be a tight bound, even if we restrict strategy profiles in the competition to SPE:

\begin{theorem}
For any $S\ge 2$  there exists a set of buyers, $B$, and valuations such that
there are strategy tuples $\hat \pi\in \Gamma^1$ and $(f,\pi) \in \Gamma^S$,
satisfying:
$\frac{SW(\pi,f)}{SW(\hat \pi)}  = 1+\inv{S}$.
\end{theorem}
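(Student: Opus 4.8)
The plan is to read the extremal matrix off the equality analysis in the proof of Theorem~\ref{The:spe_prof_max_up_bound}: the value $1+\frac{1}{S}$ is reached there only on the boundary $\rho=1$ together with $\min\{S,B\}=S$, and only when the monopoly bound $SW(\hat\pi)=\frac{p^1+p^2}{3G}$ of Lemma~\ref{lemma:SW_third_opt_and_sw_refined_part_2} and the competition bound $SW(\pi,f)=\frac{c_1+\min\{S,B\}\,p^2}{SG}$ of Lemma~\ref{lem:max_SW_in_comp} are both attained. So I would look for the simplest matrix with $c_1=p^2$ admitting a revenue-maximizing monopoly partition of welfare exactly $\frac{p^1+p^2}{3G}$ and an $S$-seller SPE of welfare exactly $\frac{c_1+Sp^2}{SG}$. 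Concretely I would take $B=S+1$ buyers and $G=2$ goods, with $V$ having $S$ rows equal to $(1,0)$ and one row equal to $(0,1)$; that is, good $1$ is valued by buyers $1,\dots,S$ and good $2$ only by buyer $S+1$, so $p^1=2$, $p^2=c_1=1$, and $\min\{S,B\}=S$.

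On the monopoly side, since $G=2$ the seller has only the two partitions $\hat\pi=\{\{1,2\}\}$ and $\{\{1\},\{2\}\}$. I would check that both yield revenue $\frac12$ (every buyer values the bundle at $1$, and good $1$ is valued at $1$ by $S\ge2$ buyers), so $\hat\pi\in\Gamma^1$, while $\bar V_1(G)=1$ gives $SW(\hat\pi)=\frac{\bar V_1(G)}{G}=\frac12$, matching $\frac{p^1+p^2}{3G}=\frac12$.

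On the competition side I would exhibit the profile $(\pi,f)$ in which every seller discloses everything, buyer $b$ chooses seller $b$ for $b\le S$, and buyer $S+1$ chooses seller $1$. Then seller $1$ allocates both goods efficiently among $\{1,S+1\}$ and each seller $s\ge2$ allocates good $1$ efficiently to buyer $s$, so $SW(\pi,f)=\frac{1}{SG}\bigl(2+(S-1)\bigr)=\frac{S+1}{2S}$, which meets $\frac{c_1+\min\{S,B\}\,p^2}{SG}=\frac{1+S}{2S}$. Combining the two sides gives $\frac{SW(\pi,f)}{SW(\hat\pi)}=\frac{(S+1)/(2S)}{1/2}=1+\frac{1}{S}$, so the upper bound of Theorem~\ref{The:spe_prof_max_up_bound} is tight.

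The one step needing real care is confirming that $(\pi,f)$ is genuinely a semi-pure SPE, which requires prescribing the buyers' off-path behaviour. On the path each buyer gets utility $\frac12$, and any move to a seller already attended by another buyer who values good $1$ gives $0$, so $f$ is a Nash equilibrium of the disclose-everything subgame. Since $G=2$, a seller's only unilateral deviation is to $\{\{1,2\}\}$; in the spirit of Example~\ref{examp:low_in_SPE} I would prescribe that if seller $1$ deviates then buyer $S+1$ relocates to seller $2$ (this keeps every buyer's utility at $\frac12$ and leaves the deviating seller with a single bidder, hence revenue $0$, no better than on the path), that if a seller $t\ge2$ deviates then all buyers stay put (again leaving that seller with the lone bidder $t$), and, after any profile in which two or more sellers deviate, an arbitrary pure Nash equilibrium of the induced subgame, which exists by Corollary~\ref{corol:buy_steady}. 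The work is to check that each of these buyer profiles really is a Nash equilibrium and that consequently every seller's strategy is a best reply earning $0$; once that is done, the welfare and revenue computations above are one line each. I expect this off-path SPE verification to be the main obstacle; everything else follows immediately from the lemmas already established.
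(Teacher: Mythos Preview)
Your proposal is correct and essentially identical to the paper's proof: the paper uses $B=S+1$, $G=2$, with $S$ rows $(0,1)$ and one row $(1,0)$, whereas you swap the two goods and let seller $1$ (rather than seller $S$) receive the two complementary buyers. The off-path prescription differs cosmetically---the paper moves one of the popular-good buyers away from the deviating two-buyer seller, while you move the unique-taste buyer $S+1$---but both choices yield a NE in the relevant subgame and leave the deviator with a single bidder and zero revenue, so the SPE verification goes through in either version.
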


\begin{proof}
For any~$S \geq 2$, set $B = S + 1, G = 2$ and consider the following valuation matrix:
\begin{equation*}
V =
\begin{bmatrix}
S \left\{
\begin{matrix}
0 & 1\\
\cdots & \cdots\\
0 & 1
\end{matrix}\right.\\
\begin{matrix}
1 & 0
\end{matrix}
\end{bmatrix}
\end{equation*}
A single profit maximizing seller may use the strategy $\hat \pi =\{1,2\}$ (i.e., disclose nothing), yielding $SW(\hat \pi) = 1/2$.

For the setting with $S$ sellers, the following constitutes a SPE:
\begin{itemize}
\item
Sellers strategy: - All the sellers disclose everything.
\item
Buyers strategy:
\begin{itemize}
\item
If no seller deviates from the above prescribed strategy then buyer $i$ goes to seller $i$, for $i<S$ and buyers $S,S+1$ go to seller $S$.
\item
If one seller deviates, say seller $k<S$ then then buyer $i$ goes to seller $i$, for $i<S$ and buyers $S,S+1$ go to seller $S$.
\item
If seller $S$ deviates buyer $S$ moves to seller $1$.
\item
If at $2$ sellers or more deviate then an arbitrary equilibrium profile of the induced subgame is played.
\end{itemize}
\end{itemize}

The social welfare obtained in this strategy tuple is
$\inv{S}((S-1)\frac{1}{2} + 1\cdot 1) =\frac{S+1}{2S}$, therefore the ratio is $1+\inv{S}$.
\end{proof}

Recall that Theorem~\ref{The:SW_third_opt_and_sw} provides a different lower bound on the social welfare obtained by a single seller, assuming she chooses a social welfare maximizing strategy, among all the revenue maximizing ones. In such a case:

\begin{theorem}
For any $B,G,v$ satisfying $B\ge 2$, for any $S\ge 2$, any
$\hat \pi\in \Gamma'^1$ and any $(f,\pi) \in \Gamma^S$:
$$\frac{SW(\pi,f)}{SW(\hat \pi)} \leq \frac{\min\set{S,B}}{S} \leq 1.$$
\end{theorem}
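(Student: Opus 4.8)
The plan is to obtain the bound by simply dividing the two estimates already in hand. An upper bound on the numerator is supplied by Lemma~\ref{lem:max_SW_in_comp}: any strategy tuple $(\pi,f)$ in $SC(S,B,G,V)$ satisfies $SW(\pi,f)\le \frac{c_1+\min\set{S,B}\cdot p^2}{SG}$, where $c_1=p^1-p^2\ge 0$. A lower bound on the denominator is supplied by the second part of Lemma~\ref{lemma:SW_third_opt_and_sw_refined_part_2}: since $\hat\pi\in\Gamma'^1$ is a revenue-maximizing strategy that is moreover welfare-optimal among all revenue-maximizing strategies, $SW(\hat\pi)\ge \frac{p^1+p^2}{2G}$. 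The crucial point is that the hypothesis $\hat\pi\in\Gamma'^1$ lets us use this \emph{factor-$2$} monopoly bound rather than the weaker factor-$3$ bound used in the proof of Theorem~\ref{The:spe_prof_max_up_bound}, and this sharpening is exactly what drives the ratio below $1$.

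First I would dispatch the degenerate case $p^1=0$: then no good is demanded by anyone, so $SW$ is identically $0$ and the claimed inequality holds trivially. Assuming henceforth $p^1\ge 1$, the lower bound gives $SW(\hat\pi)\ge\frac{p^1+p^2}{2G}>0$, so division is legitimate and
\[
\frac{SW(\pi,f)}{SW(\hat\pi)}\ \le\ \frac{c_1+\min\set{S,B}\cdot p^2}{SG}\cdot\frac{2G}{p^1+p^2}
\ =\ \frac{2\,(c_1+\min\set{S,B}\cdot p^2)}{S\,(p^1+p^2)}.
\]
Next, substituting $p^1+p^2=c_1+2p^2$ and writing $m\defas\min\set{S,B}$, the desired inequality $\frac{2(c_1+mp^2)}{S(c_1+2p^2)}\le\frac{m}{S}$ is, after clearing the (positive) denominators, equivalent to $2c_1\le m c_1$, i.e.\ to $(m-2)\,c_1\ge 0$. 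This holds because $c_1\ge 0$ and $m=\min\set{S,B}\ge 2$ (using $S\ge 2$ and $B\ge 2$). Finally $\frac{\min\set{S,B}}{S}\le 1$ is immediate.

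There is no genuine obstacle beyond bookkeeping: one only has to verify that the hypothesis $\hat\pi\in\Gamma'^1$ is precisely what licenses the stronger (factor-$2$) lower bound of Lemma~\ref{lemma:SW_third_opt_and_sw_refined_part_2}, and to handle the trivial $p^1=0$ case so the division makes sense. As with Theorem~\ref{The:spe_prof_max_up_bound}, the argument never actually uses that $(\pi,f)$ is an SPE, only that $SW(\pi,f)$ obeys the competition upper bound of Lemma~\ref{lem:max_SW_in_comp}, so the statement in fact holds for arbitrary strategy tuples in the $S$-seller game as well.
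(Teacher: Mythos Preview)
Your proof is correct and follows essentially the same route as the paper: both combine the competition upper bound of Lemma~\ref{lem:max_SW_in_comp} with the factor-$2$ monopoly lower bound from Lemma~\ref{lemma:SW_third_opt_and_sw_refined_part_2}, and then use $\min\set{S,B}\ge 2$ to finish the algebra (the paper writes this as $c_1\le c_1\cdot\tfrac{\min\set{S,B}}{2}$, you write it as $(m-2)c_1\ge 0$). Your explicit handling of the $p^1=0$ case is a small improvement in rigor over the paper's presentation.
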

In words, a competition cannot result in a higher efficiency in this case!

\begin{proof}
By Lemma~\ref{lemma:SW_third_opt_and_sw_refined_part_2}, $SW(\hat \pi) \ge \frac{c_1 + 2p^2}{2G}$ and by Lemma \ref{lem:max_SW_in_comp}
$SW(\pi,f) \le \frac{c_1+min\set{S,B}\cdot p^2}{SG}$.
Therefore the ratio satisfies:
\begin{eqnarray*}
\frac{SW(\pi,f)}{SW(\hat \pi)}  \le \frac{\frac{\min\set{S,B}\cdot p^2 + c_1}{SG}}
{\frac{c_1 + 2p^2}{2G}}
= \frac{2}{S}\frac{c_1 + \min\set{S,B}p^2}{c_1 + 2p^2}
\leq \frac{2}{S}\frac{c_1 \frac{\min\set{S,B}}{2} + \min\set{S,B}p^2}{c_1 + 2p^2}\\
= \frac{\min\set{S,B}}{S} \leq 1
\end{eqnarray*}
\end{proof}

Note that the last theorem, although stated for SPE actually holds for any strategy profile in the competition setting ($S\ge 2$). This bound turns out to be a tight bound, even if we restrict strategy profiles in the competition to SPE:

\begin{theorem}
For any $S\ge 2$  there exists a set of buyers, $B$, and valuations such that
there are strategy tuples $\hat \pi\in \Gamma'^1$ and $(f,\pi) \in \Gamma^S$,
satisfying:
$\frac{SW(\pi,f)}{SW(\hat \pi)}  = \frac{\min\set{S,B}}{S}$.
\end{theorem}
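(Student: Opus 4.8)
The plan is to exhibit, for each $S\ge 2$, an explicit instance together with matching strategy tuples realizing the bound. I will use the simplest possible product structure: a single variant, $G=1$, and $B$ buyers for some fixed $B$ with $2\le B\le S$, all of whom value the unique variant, i.e. $v^1_b=1$ for every $b\in B$. The key simplification is that with $G=1$ the set $\Pi$ of partitions of the good set is a singleton, hence so is $\Pi^S$; consequently each seller has a unique pure strategy, every subgame-perfection condition that quantifies over seller deviations is vacuous, and in the monopoly $\Gamma^1=\Gamma'^1$ consists exactly of this unique strategy, which I denote $\hat\pi$. I write $\pi=(\pi_s)_{s\in S}$ for the corresponding unique tuple of seller strategies in the $S$-seller game.

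For the monopoly, by \eqref{eq:social_welfare} we have $SW(\hat\pi)=\bar V_1(B,\{1\})=\max_b v^1_b=1$; since $\hat\pi$ is the only available strategy it is trivially both revenue-maximizing and social-welfare-maximizing, so $\hat\pi\in\Gamma'^1$. For the competition I take $f$ to be the buyer profile sending buyer $b$ to seller $b$ for $b=1,\dots,B$ (possible since $B\le S$); as $\Pi^S$ is a singleton this is a well-defined constant function of the seller tuple. I would then verify that $f(\pi)$ is a Nash equilibrium of the induced buyer game $SC_\pi$: a buyer who is alone at her seller wins the good for free and obtains utility $1/G=1$, whereas relocating to a seller already occupied by another buyer who values the good produces a tie at valuation $1$ and yields utility $0$ by \eqref{eq:buyer_utility}, so no unilateral deviation is profitable. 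Together with the vacuous seller-side conditions this shows $(f,\pi)$ is a semi-pure \SPE, i.e. $(f,\pi)\in\Gamma^S$. Evaluating \eqref{eq:social_welfare} for this tuple: the $B$ sellers that attract a buyer each contribute $\bar V_1(\{b\},\{1\})=1$ while the remaining $S-B$ empty sellers contribute $0$, so $SW(\pi,f)=\frac{1}{SG}\cdot B=\frac{B}{S}$.

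Combining the two computations gives $\frac{SW(\pi,f)}{SW(\hat\pi)}=\frac{B/S}{1}=\frac{B}{S}=\frac{\min\set{S,B}}{S}$, the last equality using $B\le S$; e.g. $B=2$ gives ratio $2/S$ and $B=S$ gives the boundary case $1$. The only step needing a genuine (and entirely routine) argument is the verification that the one-buyer-per-seller profile is a Nash equilibrium; everything else is forced by the degeneracy $G=1$. If a non-degenerate good set is preferred, the identical conclusion holds for $G=2$ with all $B$ buyers having valuation $(1,1)$: every seller strategy is then simultaneously revenue- and welfare-optimal with $SW(\hat\pi)=1$, a lone buyer still extracts utility $1$ whatever partition her seller chooses, and the competitive social welfare is again $B/S$.
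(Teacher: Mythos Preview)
Your construction is correct and in fact cleaner than the paper's. The paper fixes $B=S$, takes an arbitrary $G$ with every buyer valuing every good, has all sellers use the finest partition, and sends buyer $i$ to seller $i$; both the monopoly and the competition then attain social welfare $1$, so the ratio is $1=\min\{S,B\}/S$. Your approach instead kills the seller side entirely by taking $G=1$, which collapses $\Pi$ to a singleton and makes every seller-optimality and subgame-perfection requirement vacuous; you then let $B$ be any integer in $\{2,\dots,S\}$ and spread the buyers one-per-seller, obtaining ratio $B/S$. Both constructions realize the bound, but yours is more flexible (it hits every value $B/S$ for $2\le B\le S$, not just the endpoint $1$) and needs no argument about seller best responses or off-path subgames. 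The only point that could use a word of care is the convention $\bar V_2(\{b\},\bar G)=0$ for a singleton buyer set and $\bar V_1(\emptyset,\bar G)=0$ for an empty one; the paper uses these implicitly (see e.g.\ the proof of Theorem~\ref{rem:examp_tight}), so you are on safe ground.
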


\begin{proof}
For any~$S = B \geq 2$, consider the case where every buyer wants each
good, i.e.~$\forall b\in B, g\in G: v_b^g = 1$.
Here, $\hat \pi\in \Gamma'^1$ can be $\set{g}_{g\in G}$, yielding social
welfare~$1$.

Consider the \SPE, where $\forall\pi: f_i(\pi) = i$, and all the sellers
disclose everything, i.e.~$\pi = \set{g}_{g\in G}$. This yields the social
welfare~$1$. Therefore, the ratio is $1 = \frac{\min\set{S,B}}{S}$.
\end{proof}

\subsection{Lower Bound}

\begin{theorem}
If $S \geq 2$, $B \geq S$ and $v_b(G)>0\ \forall b \in B$ (in words, each buyer has a positive demand), then for for any
$\hat \pi\in \Gamma^1$ and $(f,\pi) \in \Gamma^S$:
$$\frac{SW(\pi,f)}{SW(\hat \pi)} \ge \frac{1}{G},$$
and this bound is tight.
\end{theorem}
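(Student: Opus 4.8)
The plan is to obtain the inequality by combining two facts that are already available — the competitive lower bound behind Theorem~\ref{rem:examp_tight} and the trivial bound $0<SW(\hat\pi)\le 1$ — and then to certify tightness using the family of instances in Example~\ref{examp:low_in_SPE}. For the lower bound, I would fix an arbitrary semi-pure SPE $(f,\pi)\in\Gamma^S$. Since $S\ge 2$, $B\ge S$ and every buyer has positive demand, the proof of Theorem~\ref{rem:examp_tight} applies verbatim and yields $SW(\pi,f)\ge\frac1G$. For the denominator, I note that a single seller allocates each good to at most one buyer who values it $0$ or $1$; equivalently $\bar V_1(B,\bar G)\le|\bar G|$ for every block of $\hat\pi$, so \eqref{eq:social_welfare} gives $SW(\hat\pi)\le 1$, while choosing some $b$ with $v_b(G)>0$ and some $g$ with $v_b^g=1$ shows that the block of $\hat\pi$ containing $g$ contributes a positive amount, so $SW(\hat\pi)>0$ and the quotient is well defined. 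Dividing, $\frac{SW(\pi,f)}{SW(\hat\pi)}\ge\frac{1/G}{1}=\frac1G$; this half is essentially bookkeeping, since the real content was front-loaded into Theorem~\ref{rem:examp_tight}.

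For tightness, given the prescribed $S\ge 2$ and any $G$, I would take the instance of Example~\ref{examp:low_in_SPE}, using for $S>2$ the $S$-fold identity generalization mentioned in its footnote: $B=SG$ buyers whose valuation matrix is $S$ stacked copies of the $G\times G$ identity, so $v_b(G)=1>0$ for all $b$ and $B\ge S$. Two checks then suffice. First, in the monopoly the partition $\hat\pi=\{\{g\}:g\in G\}$ lies in $\Gamma^1$: disclosing everything gives revenue $\frac1G\sum_{g}\bar V_2(B,\{g\})=1$ because each good is wanted by $S\ge 2$ buyers, and $1$ is the largest revenue possible, so this partition is revenue-optimal; the same $\hat\pi$ has $SW(\hat\pi)=\frac1G\sum_g\bar V_1(B,\{g\})=1$. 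Second, in the competition the contingent profile written out in Example~\ref{examp:low_in_SPE} is a semi-pure SPE in which each seller ends up with exactly one buyer per good and discloses nothing, hence contributes $\bar V_1({\calB}_s(\pi,f),G)=1$ to \eqref{eq:social_welfare}, so $SW(\pi,f)=\frac1{SG}\cdot S=\frac1G$. Combining the two, $\frac{SW(\pi,f)}{SW(\hat\pi)}=\frac1G$, so the bound is attained.

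I do not anticipate a genuine obstacle: the substantive work was already carried out in Theorem~\ref{rem:examp_tight} and in Example~\ref{examp:low_in_SPE}. The only points requiring a line of care — both of which are in fact discussed alongside Example~\ref{examp:low_in_SPE} — are verifying that ``disclose everything'' is a \emph{revenue}-maximizing monopoly strategy in the tight instance (so that $\hat\pi\in\Gamma^1$, not merely welfare-maximizing), and confirming that the buyers' full contingent strategy there remains subgame perfect for every $S\ge 2$. Everything else reduces to one-line estimates.
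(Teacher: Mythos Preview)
Your proposal is correct and follows essentially the same route as the paper: the lower bound is obtained by combining $SW(\pi,f)\ge\frac{1}{G}$ from Theorem~\ref{rem:examp_tight} with the trivial bound $SW(\hat\pi)\le 1$, and tightness is witnessed by Example~\ref{examp:low_in_SPE}, where the monopolist's full-disclosure partition is revenue-maximizing with $SW(\hat\pi)=1$. You supply a bit more detail than the paper (the positivity of $SW(\hat\pi)$ and the explicit check that full disclosure lies in $\Gamma^1$), but the argument is the same.
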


\begin{proof}
As in the proof of the
Theorem~\ref{rem:examp_tight}, we obtain that $SW(\pi,f) \geq \inv{G}$.

On the other hand, $SW(\hat \pi) \leq 1$. Thus,
$\frac{SW(\pi,f)}{SW(\hat \pi)} \ge \frac{1}{G}$.

To demonstrate that the bound is tight we re-visit Example~\ref{examp:low_in_SPE}.
 In that example, the social welfare in the ~\SPE{} is $\frac{1}{G}$, while a profit maximizing single
seller may disclose everything, yielding a social welfare of $1$. Thus, the ratio is $\frac{1}{G}$.
\end{proof}

\section{Conclusion} \label{Sec:conclusion_further}
 We analyzed the impact on social welfare due to the introduction of competition
 in a binary valued model of signaling.
 As more sellers compete for buyers we expect sellers to reveal more information
 in order to attract buyers, thus resulting in increased welfare. On the other hand,
 it also causes  market thinning  for any given seller, and so could decrease social welfare.
 As we show the potential increase is negligible while the potential decrease is catastrophical.



\bibliography{bibnew}

\bibliographystyle{acmsmall}

\received{Month YYYY}{Month YYYY}{Month YYYY}




\end{document}